\documentclass[11pt,a4paper]{article}
\usepackage{amsmath, amsthm, amscd, amsfonts, amssymb, graphicx, color}
\usepackage[utf8]{inputenc}
\usepackage{graphicx}
\usepackage{url}
\usepackage[titlenumbered,ruled,noend,algosection]{algorithm2e}

\newtheorem{thm}{Theorem}[section]
\newtheorem{cor}[thm]{Corollary}
\newtheorem{lem}[thm]{Lemma}

\newtheorem{defn}[thm]{Definition}
\newtheorem{obs}[thm]{Observation}

\DontPrintSemicolon

\begin{document}
\title{Approximate Hotspots of Orthogonal Trajectories}
\author{Ali Gholami Rudi\thanks{
{Department of Electrical and Computer Engineering},
{Bobol Noshirvani University of Technology}, {Babol, Mazandaran, Iran}.
Email: {\tt gholamirudi@nit.ac.ir}.}}
\date{}
\maketitle

\begin{abstract}
We study the problem of finding hotspots,
i.e.~regions, in which a moving entity spends a significant amount of time,
for polygonal trajectories.
The fastest exact algorithm, due to Gudmundsson, van Kreveld, and Staals (2013)
finds an axis-parallel square hotspot of fixed side length in $O(n^2)$ for
a trajectory with $n$ edges.
Limiting ourselves to the case in which the entity moves in a
direction parallel either to the $x$ or to the $y$-axis,
we present an approximation algorithm with the time complexity $O(n \log^3 n)$
and approximation factor $1/2$.
\\
\\
{\small \textbf{Keywords}: Trajectory, Hotspot, Kinetic tournament, Geometric algorithms}\\
\end{abstract}

\section{Introduction}
\label{sint}
Tracking technologies like GPS gather huge and growing collections
of trajectory data, for instance for cars, mobile devices, and animals.
The analysis of these collections poses many interesting
problems, which has been the subject of much attention recently \cite{zheng15}.
One of these problems is the identification of the region,
in which an entity spends a large amount of time.  Such regions
are usually called stay points, popular places, or hotspots in the
literature.

We focus on polygonal trajectories, in which
the trajectory is obtained by linearly interpolating
the location of the moving entity, recorded at specific
points in time (the assumption of polygonal trajectories
is very common in the literature; see for instance \cite{benkert10,buchin11,aronov16,rudi18}).
Gudmundsson et al.~\cite{gudmundsson13} define several problems about
trajectory hotspots and present an $O(n^2)$ exact algorithm to solve the
following: defining a hotspot as an axis-aligned square of fixed side length,
the goal is to find a placement of such a square that maximizes the time the
entity spends inside it for a trajectory with $n$ edges
(there are other models and assumptions about hotspots,
for a brief survey of which, the reader may consult \cite{gudmundsson13};
e.g.~the assumption of pre-defined potential regions \cite{alvares07},
counting only the number of visits or the number of visits
from different entities \cite{benkert10}, or
based on the sampled locations only \cite{tiwari13}).
To solve this problem, they first show that the function that
maps the location of the square to the duration the trajectory
spends inside it, is piecewise linear and its breakpoints happen
when a side of the square lies on a vertex, or a corner
of the square on an edge of the trajectory.
Based on this observation, they subdivide the plane into $O(n^2)$
faces and test each face for the square with the maximum duration.

In this paper, we limit ourselves to trajectories whose edges
are parallel to the axes of the coordinate system (we call
these trajectories \emph{orthogonal}).
One possible application of this problem is
finding regions in a (possibly multi-layer, 3-dimensional)
VLSI chip, with high wire density considering their current,
to identify potential chip hot spots.
The algorithm presented by Gudmundsson et al.~\cite{gudmundsson13}
finds an exact solution for this problem in $O(n^2)$;
we are not aware of a faster exact algorithm.
Our contribution is to provide a faster approximation algorithm
with constant approximation factor.
A $c$-approximate hotspot of a trajectory is a square,
in which the entity spends no less than $c$ times the time
it spends in the optimal hotspot.
We present an algorithm for this problem
with an approximation ratio of $1/2$ and the time complexity $O(n \log^3 n)$,
in which $n$ is the number of trajectory edges.
In this algorithm we combine kinetic tournaments \cite{basch99}
with segment trees to maintain the maximum among sums of a
set of piecewise linear functions.
We also present a simpler $O(n \log n)$ time algorithm
for finding $1/4$-approximate hotspots of orthogonal trajectories.

The rest of this paper is organized as follows.
In Section~\ref{sprel}, we introduce the notation used in this paper
and also present a simple $1/4$-approximation algorithm for
finding hotspots.
In Section~\ref{S12}, we present our main algorithm,
and finally, we extend the algorithm to trajectories
in $R^3$ in Section~\ref{scon}.

\section{Preliminaries and Basic Results}
\label{sprel}
A trajectory specifies the location of a moving entity through time.
Therefore, it can be described as a function that maps each
point in a specific time interval to a location in the plane.
Similar to Gudmundsson et al.~\cite{gudmundsson13},
we assume that trajectories are continuous and piecewise linear.
The location of the entity is recorded at different points in time,
which we call the vertices of a trajectory; when necessary we
use the notation $\tau (v)$ to denote the timestamp of vertex $v$.
We assume that
the entity moves in a straight line and at constant speed from
a vertex to the next (this simplifying assumption is very common in
the literature but there are other models for the movement
of the entity between vertices \cite{miller91}); we call the sub-trajectory
connecting two contiguous vertices, an edge of the trajectory.

We represent a trajectory with its set of edges.
This representation does not preserve the order of trajectory edges;
in the problem studied in this paper the order of trajectory
edges is insignificant and this representation is sufficient.
In \emph{orthogonal} trajectories, all trajectory edges are
parallel either to the $x$ or to the $y$-axis.
In \emph{horizontal} (similarly \emph{vertical}) trajectories all edges
are parallel to the $x$-axis ($y$-axis).
We assign a weight to each edge to show how long the entity was
moving on it (Definition~\ref{dweight}).

\begin{defn}
\label{dweight}
To an edge $e = uv$, we assign a weight $w_e$, which denotes
the duration of the sub-trajectory through its end points (the difference
between the time recorded for its end points),
therefore $w_e = \tau(v) - \tau(u)$, where $\tau(w)$ is the
timestamp of vertex $w$.
\end{defn}

Unless otherwise stated, we assume that all squares mentioned
in the rest of this paper to be axis-parallel and of side length $s$,
which is an input and fixed during the algorithms.

\begin{defn}
$\mathit{Square}(x, y)$ is a square whose lower left corner is
at position $(x, y)$ on the plane.
The weight of a square $r$ with respect to a trajectory $T$ is
the total duration in which the entity spends inside it.
We denote it with $w_T (r)$, or if there is no
confusion $w(r)$.  More formally, if the trajectory enters
square $r$ $m$ times, $w_T (r) = \sum_{i=1}^{m} (l_i - e_i)$,
in which $e_i$ and $l_i$ denote the time at which the entity
enters and leaves the square in its $i$-th visit, respectively.
\end{defn}

We now define two of the main concepts of this paper, i.e.~hotspots
and approximate hotspots (Definitions \ref{dhotspot} and \ref{dapprox}).

\begin{defn}
\label{dhotspot}
A \emph{hotspot} is a square with the maximum possible weight.
We denote the weight of a hotspot of trajectory $T$ with $h(T)$.
\end{defn}
\begin{defn}
\label{dapprox}
A $c$-approximate hotspot of a trajectory $T$, for $c \leq 1$,
is a square whose weight is at least $c$ times the weight of a hotspot of $T$.
\end{defn}

\begin{figure}
	\centering
	\includegraphics{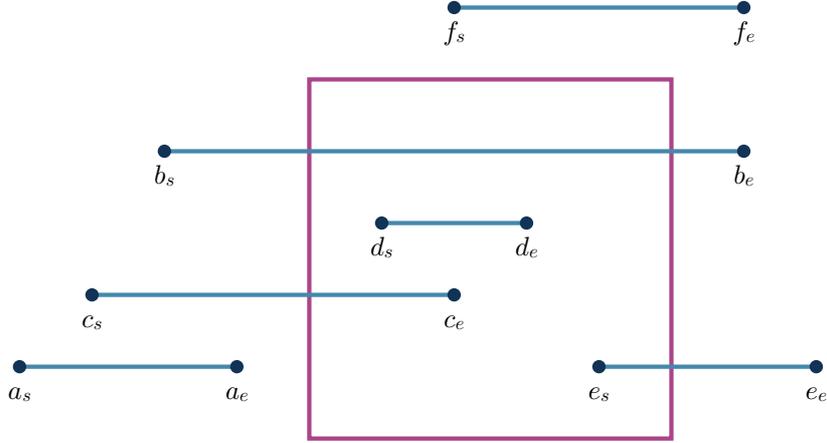}
	\caption{The contribution rate of $e$ is positive, the contribution rate of $c$
	is negative, and the contribution rate of every other edge is zero}
	\label{frate}
\end{figure}

\begin{defn}
\label{drate}
Let $r$ be an axis-parallel square and $H$ be a horizontal trajectory.
The contribution rate of an edge $e$ of $H$ with respect to $r$ is
the rate at which the contribution of the weight of the edge to the
weight of $r$ increases, if $r$ is moved to the right, in
the positive direction of the $x$-axis
(the contribution rate shows the slope of the curve in the
square position--edge contribution plane).
We denote the contribution rate of $e$ as $r(e)$.
\end{defn}
It is not difficult to see that the absolute value of $r(e)$
is either zero (for non-intersecting edges) or the ratio of
its duration to its length.
Note that if a square is moved to the left, the rate at which
an edge $e$ contributes to the weight of the square is $-r(e)$.
In Figure~\ref{frate}, the contribution rate of all edges except
$e$ and $c$ are zero.

\begin{defn}
\label{dtrate}
The contribution rate of a horizontal trajectory $H$ with
respect to square $r$, denoted as $r(H)$,
is defined as the sum of the contribution rates of all edges of $H$.
\end{defn}

We now present some preliminary results about orthogonal trajectories.

\begin{lem}
\label{hdecomp}
Let $H$ and $V$ be a partition of an orthogonal trajectory $T$,
in which $H$ contains its horizontal and $V$ contains its
vertical edges.
Let $h$ be the maximum of $h(H)$ and $h(V)$.
Then, $h$ is at least $h(T) / 2$.
\end{lem}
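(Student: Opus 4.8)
The plan is to exploit the additivity of the square-weight function over a partition of the trajectory's edge set. First I would fix an optimal hotspot for $T$, that is, a square $r$ with $w_T(r) = h(T)$. The key observation is that the time the entity spends inside $r$ decomposes according to which edge it is currently traversing: every portion of the trajectory lying inside $r$ is covered by pieces of individual edges, and since $H$ and $V$ partition the edge set of $T$, this gives
$w_T(r) = w_H(r) + w_V(r)$,
where $w_H(r)$ and $w_V(r)$ denote the weights of $r$ with respect to the edge sets $H$ and $V$ respectively (the representation of a trajectory as an unordered set of edges, introduced above, makes these quantities well defined). The one point to verify here is that nothing is double-counted or lost at the boundary of $r$ or at a vertex shared by a horizontal and a vertical edge; such coincidences contribute zero duration, so they do not affect the identity.

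Given this identity, the rest is immediate. From $w_T(r) = w_H(r) + w_V(r)$ we obtain $\max\{w_H(r), w_V(r)\} \ge w_T(r)/2$. By Definition~\ref{dhotspot}, $h(H) \ge w_H(r)$ and $h(V) \ge w_V(r)$, since a hotspot of $H$ (respectively $V$) is by definition a square of maximum weight. Combining, $h = \max\{h(H), h(V)\} \ge \max\{w_H(r), w_V(r)\} \ge w_T(r)/2 = h(T)/2$, which is the claim.

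I expect the only real subtlety — and it is a mild one — to be the justification of the additivity $w_T(r) = w_H(r) + w_V(r)$: the argument must be phrased in terms of the time spent on each individual edge inside $r$ (summing the per-edge contributions), rather than in terms of the ``visits'' of $T$ to $r$ used in the definition of $w_T$, because a single visit of $T$ to $r$ may be split among several edges belonging to both $H$ and $V$. Once this is set up carefully, everything else is a one-line consequence of the definitions, and in particular the bound $1/2$ is seen to be essentially the best one can hope for from this kind of coordinate-wise splitting.
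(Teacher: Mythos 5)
Your proposal is correct and follows essentially the same argument as the paper: fix a hotspot $r$ of $T$, use the additivity $w_T(r) = w_H(r) + w_V(r)$ over the partition of the edge set, and conclude via $h(H) \ge w_H(r)$ and $h(V) \ge w_V(r)$. Your extra care about per-edge (rather than per-visit) accounting of the additivity is a reasonable refinement that the paper takes for granted.
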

\begin{proof}
Let $r$ be a hotspot in $T$.  Every edge of $T$ is either
in $H$ or in $V$ and thus $w_H (r) + w_V (r)$ equals $h(T)$.
Therefore, either $w_H (r) \ge h(T) / 2$ or $w_V(r) \ge h(T) / 2$.
Since $h(H) \ge w_H (r)$ and $h(V) \ge w_V (r)$,
we have $\max(h(H), h(V)) \ge h(T) / 2$ as required.
\end{proof}

\begin{lem}
\label{hside}
Let $H$ be a horizontal trajectory.  There exists at least one square,
whose weight equals $h(H)$
such that one of its vertical sides contains a vertex of $H$.
\end{lem}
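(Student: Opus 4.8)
The plan is to start from an arbitrary hotspot of $H$ and slide it horizontally, without ever decreasing its weight, until a vertical side of the square lands on a vertex. Concretely, let $r_0 = \mathit{Square}(x_0, y_0)$ be a square with $w(r_0) = h(H)$, and consider the one-variable function $f(x) = w(\mathit{Square}(x, y_0))$ obtained by translating $r_0$ along the $x$-axis. Since every edge of $H$ is horizontal, the set of edges that meet the square vertically (i.e.\ those whose common $y$-coordinate lies in $[y_0, y_0 + s]$) does not change as $x$ varies, so by Definition~\ref{drate} the contribution of each such edge to the weight is a continuous, piecewise-linear function of $x$ whose slope is its signed contribution rate; summing over the finitely many relevant edges shows that $f$ itself is continuous and piecewise linear.

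Next I would pin down where the breakpoints of $f$ can occur. For a single horizontal edge $e$ with $x$-extent $[a, b]$, its contribution to $\mathit{Square}(x, y_0)$ is proportional to the length of $[x, x + s] \cap [a, b]$, which is piecewise linear in $x$ with breaks exactly at $x \in \{a - s,\ a,\ b - s,\ b\}$; at each of these four values one of the two vertical sides $x$ or $x + s$ of the square coincides with an endpoint of $e$, hence with a vertex of $H$. Aggregating over the relevant edges shows that every breakpoint of $f$ occurs at a position in which a vertical side of the translated square contains a vertex of $H$ — this is precisely the behaviour recorded in the remarks following Definition~\ref{drate}.

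Finally I would invoke optimality of $r_0$. If $h(H) = 0$ then every edge of $H$ has weight $0$, every square has weight $0$, and any placement with a vertical side through a vertex works; so assume $h(H) > 0$. The level set $S = \{ x : f(x) = h(H) \}$ is nonempty (it contains $x_0$), closed, and bounded, since a square placed far enough to the left or right of the bounding box of $H$ meets no edge and has weight $0 < h(H)$. Let $x^\star = \max S$. If $x^\star$ were not a breakpoint of $f$, then $f$ would be linear on a neighbourhood of $x^\star$, and because $f(x^\star) = h(H)$ is the global maximum of $f$, that linear piece would have slope $0$, forcing $f \equiv h(H)$ slightly to the right of $x^\star$ and contradicting $x^\star = \max S$. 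Hence $x^\star$ is a breakpoint, so $\mathit{Square}(x^\star, y_0)$ has a vertical side containing a vertex of $H$ while still having weight $h(H)$, which proves the lemma. The only delicate step is the breakpoint characterisation above; everything else is a short extremal argument on a piecewise-linear function.
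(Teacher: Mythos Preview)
Your proof is correct and follows essentially the same approach as the paper: both arguments observe that the weight of the translated square is piecewise linear in the horizontal coordinate, with breakpoints only where a vertical side meets a vertex of $H$, and that at an optimal placement not on a breakpoint the slope must vanish, so one can slide right to a breakpoint without losing weight. Your version is somewhat more formal---handling the $h(H)=0$ case explicitly and packaging the sliding step as an extremal argument on $\max S$---while the paper simply invokes the contribution rate $r(H)$ of Definition~\ref{dtrate} directly, but the underlying idea is the same.
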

\begin{proof}
Let $r$ be a square with weight $h(H)$ (and thus a hotspot)
and suppose none of its vertical sides contains a vertex of $H$.
Clearly, $r(H)$ cannot be positive; otherwise,
the weight of $r$ increases by moving it to the right, which
is impossible since it is a hotspot.
Similarly, $r(H)$ cannot be negative (otherwise, the
weight of $r$ increases by moving it to the left, which is
again impossible).
Therefore, $r(H)$ is zero and by moving $r$ to the right
until one of its sides meets a vertex of $H$, its weight
does not change.
\end{proof}

\begin{lem}
\label{Lcorner}
Let $H$ be a horizontal trajectory.
Among all squares with at least a corner coinciding with a
vertex of $H$, let $h$ be the weight of a square with the maximum
weight.  Then, $h \ge h(H) / 2$.
\end{lem}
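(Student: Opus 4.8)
The plan is to reduce to a hotspot guaranteed by Lemma~\ref{hside} and then push it, without losing more than half its weight, into a position with a corner on a vertex. Concretely, let $r = \mathit{Square}(x_0,y_1)$ be a hotspot of $H$ one of whose vertical sides contains a vertex $v$ of $H$; say the left side, lying on the line $x=x_0$, so that $v=(x_0,y_v)$ with $y_1 \le y_v \le y_1+s$ (the case where it is the right side is symmetric after translating by $-s$ along the $x$-axis). I would then consider the two squares $r_\uparrow=\mathit{Square}(x_0,y_v)$ and $r_\downarrow=\mathit{Square}(x_0,y_v-s)$, which have $v$ as their lower-left and upper-left corner respectively; both are admissible for the statement, so it suffices to prove $w(r_\uparrow)+w(r_\downarrow)\ge w(r)=h(H)$, since then $\max(w(r_\uparrow),w(r_\downarrow))\ge h(H)/2$.

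For this inequality, the key point is that $r$, $r_\uparrow$, and $r_\downarrow$ share the same horizontal extent $[x_0,x_0+s]$, while $H$ is horizontal. So for an edge $e$ of $H$ at height $y_e$, the length of $e$ inside any one of the three squares equals the length of the overlap of $e$ with $[x_0,x_0+s]$ whenever $y_e$ lies in that square's vertical range, and $0$ otherwise; hence $e$ contributes the same amount $p_e\ge 0$ (this overlap length times $w_e$ divided by the length of $e$) to whichever of the three squares vertically contains it, and nothing to the others. It then remains to observe that the vertical range $[y_1,y_1+s]$ of $r$ is covered by the union of the vertical ranges $[y_v,y_v+s]$ of $r_\uparrow$ and $[y_v-s,y_v]$ of $r_\downarrow$ — which is exactly where the inequalities $y_1\le y_v\le y_1+s$ are used. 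Consequently every edge contributing $p_e$ to $r$ contributes $p_e$ to at least one of $r_\uparrow,r_\downarrow$, and summing over all edges of $H$ gives $w(r_\uparrow)+w(r_\downarrow)\ge w(r)$.

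I do not expect a genuine obstacle here: the argument is essentially a one-dimensional covering observation dressed up in two dimensions. The only things to be careful about are the bookkeeping of which of $r_\uparrow,r_\downarrow$ absorbs the part of $r$'s vertical range below $y_v$ versus above it, and recording the symmetric treatment of the right-side case of Lemma~\ref{hside}. Degenerate situations — $h(H)=0$, or $v$ already coinciding with a corner of $r$ — are trivial and can be dismissed in one line.
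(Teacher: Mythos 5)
Your proposal is correct and follows essentially the same route as the paper: invoke Lemma~\ref{hside} to place a hotspot with a vertex $v$ on a vertical side, take the two squares with $v$ as lower-left and upper-left corner, and use the fact that they cover the hotspot to conclude $w(r_\uparrow)+w(r_\downarrow)\ge h(H)$. Your extra argument about shared horizontal extent is a correct (if slightly more than necessary) justification of the covering step, which the paper states directly from subadditivity of the weight.
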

\begin{proof}
Let $r$ be the square with weight $h(H)$, one of whose vertical
sides contains a vertex $v$ of $H$ (such a square surely exists, as
shown in Lemma~\ref{hside}).  Suppose $v$ is on the left side
of $r$ (the argument for the right side is similar).
Let $r'$ and $r''$ be the squares with side length $s$,
whose lower left and upper left corners are on $v$,
respectively.  Given that the union of $r'$ and $r''$ covers $r$,
$w(r') + w(r'')$ is at least $h(H)$ and therefore
$\max(w(r'), w(r''))$ is at least $h(H) / 2$.  Since
$h \ge \max(w(r'), w(r''))$, we have $h \ge h(H) / 2$.
\end{proof}

In Theorem~\ref{L14}, we show how to find a maximum weight square
with a left corner on a trajectory vertex, for horizontal
trajectories.
The algorithm sweeps the plane horizontally.
We call any square whose left side is on the sweep line, a sweep square.
In Definition~\ref{Dcontrib}, we define the contribution function
of an edge.

\begin{defn}
\label{Dcontrib}
The contribution function $c_{e_i}(x)$ of the $i$-th edge $e_i$
of a horizontal trajectory $H$ shows the contribution of
$e_i$ to its intersecting sweep squares, when the sweep
line is at position $x$ on the $x$-axis.
The contribution function of an edge is piecewise linear.
Let $c_{e_i}(x) = a_i x + b_i$, where $a_i$ is the slope and
$b_i$ is the vertical intercept of $c_{e_i}$.
Table~\ref{tsweep} and Figure~\ref{fsweep} show the definition
of this function based on the relative position of an edge to
the sweep line; edge $e_i$ in Figure~\ref{fsweep} corresponds
to case i of Table~\ref{tsweep}.
\end{defn}

\begin{table}
\caption{The contribution function of a horizontal edge $e_i$
to their intersecting sweep squares ($c_{e_i} (x) = a_i x + b_i$).
The sweep line is at $x$, the right side of sweep squares is at $x + s$,
the $x$-coordinate of the left end point of edge $e_i$ is $x_i$,
the $x$-coordinate of the right end point of edge $e_i$ is $x'_i$,
the duration of $e_i$ is $w_i$,
and the ratio of the duration of $e_i$ to its length is $m_i$ ($m_i = w_i / (x'_i - x_i)$).
}
\label{tsweep}
\centering
\begin{tabular}{|c|c|l|l|}
\hline
Case & Condition & $a_i$ & $b_i$\\
\hline
1 & $x_i \le x'_i \le x \le x+s$	& 0	& 0 \\
2 & $x_i \le x \le x'_i \le x+s$	& $-m_i$	& $m_i x_i - m_i w_i$\\
3 & $x_i \le x \le x+s \le x'_i$	& 0	& $m_i s$ \\
4 & $x \le x_i \le x'_i \le x+s$	& 0	& $w_i$ \\
5 & $x \le x_i \le x+s \le x'_i$	& $m_i$	& $m_i s - m_i x_i$ \\
6 & $x \le x+s \le x_i \le x'_i$	& 0	& 0 \\
\hline
\end{tabular}
\end{table}
\begin{figure}[ht]
\centering
\includegraphics{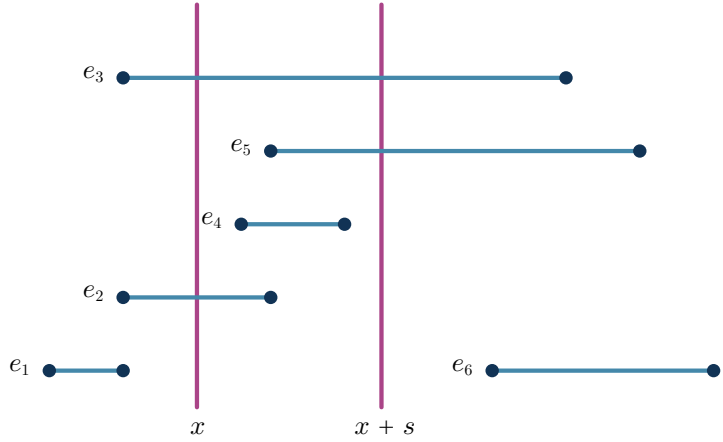}
\caption{
The left vertical line (at $x$) is the sweep line and
the right vertical line (at $x + s$) is the line at which the right
side of sweep squares lie.
Representing the $x$-coordinate of the left end point of
edge $e_i$ with $x_i$,
the duration of $e_i$ with $w_i$,
and the ratio of the duration of $e_i$ to its length with $m_i$,
we have
$c_{e_1}(x) = 0$,
$c_{e_2}(x) = -m_2 \cdot (x - x_2) + w_2$,
$c_{e_3}(x) = m_3 \cdot s$,
$c_{e_4}(x) = w_4$,
$c_{e_5}(x) = m_5 \cdot (x + s - x_5)$,
and $c_{e_6}(x) = 0$.
Edge $e_i$ corresponds to case i of Table~\ref{tsweep}.
}
\label{fsweep}
\end{figure}

\begin{thm}
\label{L14}
Among all axis-parallel squares with side length $s$ and with a
left corner on a vertex of a horizontal trajectory $H$ with $n$ edges,
the square with the maximum weight can be found with
the time complexity $O(n \log n)$.
\end{thm}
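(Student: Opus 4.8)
The plan is to perform a horizontal sweep and, at each candidate position $x$ (the $x$-coordinates of vertices of $H$, plus these coordinates shifted by $-s$, since a left corner of the square on a vertex means either the left side is at a vertex or the square extends $s$ to the left of nothing relevant — actually only vertex $x$-coordinates are needed for the left side), compute $\sum_i c_{e_i}(x)$, the total weight of the sweep square whose left side is at $x$. By Lemma~\ref{Lcorner} the best such square (over vertical placements that put a left corner on a vertex) has weight at least $h(H)/2$, but for this theorem we only need to find, exactly, the maximum over squares whose left side lies on a vertical line through a vertex and whose horizontal position is one of the two choices (lower-left or upper-left corner on the vertex); the vertical coordinate of the square is then determined by the vertex, so the real work is the one-dimensional sweep in $x$.

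First I would sort the $O(n)$ event points: for each edge $e_i$, the contribution function $c_{e_i}$ is piecewise linear with breakpoints at $x_i - s$, $x_i$, $x_i' - s$, and $x_i'$ (the transitions between the six cases of Table~\ref{tsweep}), so each edge generates $O(1)$ events, for $O(n)$ events total, sortable in $O(n\log n)$. Second, I would sweep a vertical line left to right maintaining the aggregate function $C(x) = \sum_i c_{e_i}(x) = (\sum_i a_i) x + (\sum_i b_i)$, updating the running sums $A = \sum a_i$ and $B = \sum b_i$ at each event by adding the change $\Delta a_i$, $\Delta b_i$ dictated by which case $e_i$ moves into. Since $C$ is piecewise linear, on each interval between consecutive events it is linear, so its maximum on that interval is attained at an endpoint; but we additionally need to evaluate $C$ at the vertex $x$-coordinates (these are themselves among the events, or can be inserted as query events), so it suffices to evaluate $C$ at all event points and report the maximum. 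Each event and each evaluation is $O(1)$ given the maintained sums $A$ and $B$, for $O(n)$ work after the sort.

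The only subtlety — and the step I expect to require the most care — is correctly translating ``a left corner on a vertex'' into the set of sweep positions and vertical alignments, and checking that the six-case table is exhaustive and that the increments $\Delta a_i, \Delta b_i$ at each of the $O(1)$ breakpoints of $c_{e_i}$ are consistent (i.e.\ the function is continuous, so the left and right limits agree at each breakpoint, which one verifies by plugging the breakpoint $x$-value into the two adjacent rows of Table~\ref{tsweep}). Once that bookkeeping is pinned down, correctness is immediate: every square with a left corner on a vertex of $H$ has its weight equal to $C(x)$ for the corresponding sweep position $x$ (with the vertical offset handled by treating the square's $y$-coordinate as fixed by the vertex, which only affects \emph{which} edges are ``active'' but not the form of the argument), and we have evaluated $C$ at every position that can possibly be optimal. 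The total cost is $O(n\log n)$ for the sort plus $O(n)$ for the sweep, giving the claimed bound.
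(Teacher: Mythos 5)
There is a genuine gap in the handling of the vertical dimension. Your sweep correctly manages the $x$-coordinate: the breakpoints of each $c_{e_i}$ give $O(n)$ events, and between events everything is linear. But the weight of a sweep square is \emph{not} $C(x)=\sum_i c_{e_i}(x)$ taken over all edges; it is the sum of $c_{e_i}(x)$ over only those edges $e_i$ whose $y$-coordinate lies within the vertical extent of that particular square. A square with a left corner on a vertex of edge $e_i$ spans heights $[y_i-s,\,y_i]$ or $[y_i,\,y_i+s]$, so the set of ``active'' edges changes from query to query. Maintaining a single pair of running sums $A=\sum a_i$, $B=\sum b_i$ over all edges therefore does not let you answer any individual query: you would have to re-select the active subset for each of the $O(n)$ candidate squares, which is $\Theta(n)$ per query and $\Theta(n^2)$ overall. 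You flag this yourself (``only affects which edges are active'') but supply no mechanism for it, and this is precisely the step that determines whether the algorithm runs in $O(n\log n)$ or $O(n^2)$.

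The missing idea is to store the coefficients $a_i$ and $b_i$ indexed by the rank of $e_i$ in the $y$-sorted order of edges, in a data structure supporting prefix/range sums under point updates (the paper uses two Fenwick trees, one for slopes and one for intercepts). Because the active edges for a square anchored at $e_i$ form a \emph{contiguous} range in this order (all edges within vertical distance $s$ of $y_i$, found by binary search), each query reduces to a range-sum over that interval in $O(\log n)$, and each of the $O(1)$ breakpoint updates per edge is a point update in $O(\log n)$. With that replacement for your global accumulators, the rest of your argument (event generation, case table of Table~\ref{tsweep}, evaluating only at event points) goes through and matches the paper's proof. A minor additional point: the appeal to Lemma~\ref{Lcorner} in your first paragraph is not needed for Theorem~\ref{L14}, which asks for an exact maximum over the restricted family of squares, not an approximation guarantee.
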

\begin{proof}
Let $\sigma = \left(e_1, e_2, ..., e_n\right)$ be
the sequence of trajectory edges, ordered by their $y$-coordinates.
We sweep the plane containing trajectory edges horizontally towards
the positive direction of the $x$-axis.
During the algorithm, for each edge $e_i$ we maintain its
contribution function, $c_{e_i}(x)$.

We can store the function assigned to trajectory edges in two Fenwick
trees \cite{fenwick96}: $A$ for storing the slope and
$B$ for storing the vertical intercept of the functions
in the order specified by $\sigma$.
More precisely, we store $a_i$ as the $i$-th element of $A$ and $b_i$ as
the $i$-th element of $B$.
During the algorithm we need to compute the sum of the functions of
a contiguous subsequence of the edges according to $\sigma$,
like $\sum_{q=i}^jc_{e_q}$.  To do so,
we find the sum of the elements
$i$ to $j$ in $A$ to obtain its slope and
the elements $i$ to $j$ in $B$ to obtain its vertical intercept in $O(\log n)$
using the Fenwick data structure.

The plane sweep algorithm processes four types of events:
when the left or right end point of an edge, which we
call an event point, meets the left or right side of sweep squares.
At each of these events for edge $e$, the slope and vertical
intercept of $c_e$ is updated to reflect the current
contribution of the edge to the weight of intersecting sweep squares.

To find a maximum weight square with a left corner on a
trajectory vertex (as required),
it suffices to compute the weight of sweep squares, when
one of their left corners coincides with an event point during
the algorithm.  We do this as follows.
For each event during the plane sweep algorithm for an edge
$e_i$, we first update the value of $c_{e_i}$; the slope
and the vertical intercept of the function of $e_i$ are updated
based on the relative position of the sweep line and the edge.
At an event for edge $e_i$,
the value of $a_i$ and $b_i$ are updated according to
Table~\ref{tsweep} in $A$ and $B$, respectively.

Let $k$ be the smallest index such that the vertical distance
between $e_k$ and $e_i$ (the difference between their $y$-coordinates)
is at most $s$.
Similarly, let $j$ be the largest index such that the
vertical distance between $e_i$ and $e_j$ is at most $s$;
$k$ and $j$ can be found using binary search on $\sigma$.
Then, the weight of the sweep square whose upper side is
at $e$ is $\sum_{q=j}^i c_{e_q}$ and the weight of the
sweep square whose lower side is at $e$ is $\sum_{q=i}^k c_{e_q}$.
These can be computed in $O(\log n)$ as mentioned before.

Therefore, when the algorithm finishes after processing $O(n)$
events, each with the time complexity $O(\log n)$, we can report
the maximum weight square with a left corner on a trajectory
vertex.
\end{proof}

\begin{thm}
\label{tapprox2}
There is an algorithm for finding an axis-parallel square
of side length $s$ for an orthogonal trajectory $T$,
such that the weight of the square found by the algorithm
is at least $1/4$ of the weight of a hotspot.
\end{thm}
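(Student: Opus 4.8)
The plan is to chain together the three facts already in hand: the decomposition bound of Lemma~\ref{hdecomp}, the corner bound of Lemma~\ref{Lcorner}, and the sweep algorithm of Theorem~\ref{L14}. First I would split the input orthogonal trajectory $T$ into the set $H$ of its horizontal edges and the set $V$ of its vertical edges. By Lemma~\ref{hdecomp} we have $\max(h(H), h(V)) \ge h(T)/2$, so it is enough to produce one square whose weight is at least $h(H)/2$ and a second square whose weight is at least $h(V)/2$, and then return whichever of the two is heavier; that square has weight at least $\max(h(H),h(V))/2 \ge h(T)/4$.

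Focusing on $H$, Lemma~\ref{Lcorner} guarantees that some axis-parallel square with a corner on a vertex of $H$ has weight at least $h(H)/2$. To actually compute such a square I would invoke Theorem~\ref{L14}, which finds the maximum-weight square among those with a \emph{left} corner on a vertex of $H$ in $O(n \log n)$ time. The proof of Lemma~\ref{Lcorner} distinguishes the case in which the witnessing vertex lies on the left vertical side of the relevant hotspot from the case in which it lies on the right side; the first case is exactly what Theorem~\ref{L14} handles, while the second is handled by reflecting $H$ across a vertical line, which turns right corners into left corners and preserves all weights, and then running the same algorithm. Taking the heavier of these two outputs gives, still in $O(n \log n)$ time, a square of weight at least $h(H)/2$.

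Applying the same reasoning to $V$ — after rotating the plane by ninety degrees so that $V$ becomes a horizontal trajectory, which again preserves weights and sends axis-parallel squares to axis-parallel squares — yields, in $O(n \log n)$ time, a square of weight at least $h(V)/2$. The algorithm returns the heavier of the square found for $H$ and the square found for $V$; by the discussion above its weight is at least $h(T)/4$. The running time is a constant number of $O(n \log n)$ plane sweeps, hence $O(n \log n)$ overall.

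I expect the only real subtlety to be the bookkeeping around symmetry: Theorem~\ref{L14} is stated only for left corners of a horizontal trajectory, so one must state explicitly that reflection and rotation reduce all four sub-cases (left versus right corner, horizontal versus vertical part of $T$) to that single case, and check that these transformations leave edge weights and square weights unchanged and do not affect the asymptotic cost. Beyond that, the argument is a direct composition of the stated lemmas, so there is no genuinely hard step.
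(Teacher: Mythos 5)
Your proposal is correct and follows essentially the same route as the paper: partition $T$ into horizontal and vertical edges, apply Theorem~\ref{L14} (with a symmetry transformation to cover right corners and a $90^\circ$ rotation for the vertical part), and combine Lemma~\ref{Lcorner} with Lemma~\ref{hdecomp} to get the $1/4$ bound. The only cosmetic difference is that you use a reflection where the paper rotates the plane by $180^\circ$, which is immaterial.
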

\begin{proof}
Let $T$ be an orthogonal trajectory.
$T$ can be partitioned into sets $V$ and $H$
containing the vertical and horizontal edges of $T$, respectively.
Theorem~\ref{L14} shows how to find a square $r_H$ with the maximum
possible weight, in which one of its corners is on a vertex of
$H$ (the algorithm can be performed twice, once after rotating the
plane 180 degrees to find
the maximum-weight squares with one of its right corners on a vertex of $H$).
The same algorithm can obtain a square $r_V$ with the maximum possible weight
for $V$, after rotating the plane 90 degrees.
By Lemma~\ref{Lcorner}, $w(r_H) \ge h(H) / 2$ and $w(r_V) \ge h(V) / 2$.
Also, by Lemma~\ref{hdecomp}, $\max(h(H), h(V)) \ge h(T) / 2$,
implying that $\max(w(r_H), w(r_V)) \ge h(T) / 4$, as required.
\end{proof}

\section{A $1/2$ Approximation Algorithm}
\label{S12}
In the proof of Theorem~\ref{L14}, to find a hotspot of a horizontal trajectory
with $n$ edges, as we moved a vertical sweep line to the right,
we maintained the contribution of each edge to intersecting sweep squares
(squares whose left side is on the sweep line) and computed the weight of
sweep squares when necessary.
Instead, to improve the approximation factor of the algorithm,
in this section we maintain the weight of
sweep squares (not just edge contributions) during the algorithm.
Before presenting the details of the algorithm in Section~\ref{ssmain},
we provide an overview, and review kinetic tournament
trees and segment trees in Section~\ref{ssoverview}.

\subsection{Algorithm Overview}
\label{ssoverview}
The weight of a sweep square is the sum of the contributions
of trajectory edges to its weight, and thus, piecewise linear.
Although there are infinitely many sweep squares,
Lemma~\ref{ltrack} implies that
for finding a square with the maximum weight we can keep track
of only $n$ of them, that is exactly those whose upper side
has the same height ($y$-coordinate) as a trajectory vertex.

\begin{lem}
\label{ltrack}
Let $H$ be a horizontal trajectory and let $r$ be a square
with a non-zero weight.  There exists a square $r'$ such
that $w(r') \ge w(r)$ and the $y$-coordinate of a
vertex of $H$ is equal to the $y$-coordinate of the upper
side of $r'$.
\end{lem}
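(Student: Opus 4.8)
The plan is to start with an arbitrary square $r$ of non-zero weight and push it vertically until its upper side meets a trajectory vertex, arguing that along the way the weight never decreases. First I would observe that, because $H$ is horizontal, the intersection pattern of $H$ with a square $r = \mathit{Square}(x,y)$ depends on the vertical coordinate only through which edges of $H$ have $y$-coordinate in the interval $[y, y+s]$; the horizontal extent $[x,x+s]$ alone determines, for each such edge, how much of its duration is counted. Consequently the weight $w(r)$, viewed as a function of the vertical position $y$ with $x$ fixed, is a step function: it changes only when the top side $y+s$ or the bottom side $y$ crosses the $y$-coordinate of some edge of $H$. Between two consecutive such critical heights the weight is constant.

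Next I would examine what happens at a critical height. Suppose we slide $r$ upward. As the bottom side $y$ passes the height of some edge $e$, that edge leaves the square, so its (non-negative) contribution is lost; as the top side $y+s$ passes the height of some edge $e'$, that edge enters the square and its (non-negative) contribution is gained. So an upward move can both gain and lose weight, and similarly for a downward move — a pure monotonicity-in-one-direction argument does not immediately work. The key step, and the main obstacle, is to choose the direction of motion correctly: I claim that one of the two directions is always non-decreasing. Concretely, if the current square already has a vertex of $H$ on its upper side we are done; otherwise, consider the supremum of weights $w(\mathit{Square}(x,y'))$ over all $y'$ in the (finite) set of heights reachable by sliding and having a vertex on the top side. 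I would argue that moving toward a height that realizes the maximum over the relevant finite set of critical configurations cannot decrease the weight below $w(r)$, because the weight as a function of $y'$ is a step function whose value at $r$ is just one of these steps, and some step with a vertex on the upper side has value at least $w(r)$. A clean way to see this: among the finitely many maximal intervals of constant weight, the interval containing $y$ has some value $w(r)$; its upper endpoint corresponds to a configuration in which either the top side $y+s$ lies on an edge of $H$ (done) or the bottom side $y$ lies on an edge of $H$, and in the latter case I can instead slide downward, where by the same reasoning I reach a configuration with the top side on a vertex without ever dropping below $w(r)$ — one direction terminates at a "top on vertex" configuration, and along that direction weight is non-decreasing because the only events are a top-side edge entering (gain) before we stop.

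Finally I would package this as: let $y^\star$ be the smallest $y' \ge y$ such that some vertex of $H$ has $y$-coordinate equal to $y'+s$, if one exists; on $[y, y^\star]$ the bottom side crosses no edge height strictly before $y^\star$ is reached (choosing $x$ and the starting square minimally so that no bottom-side event precedes the first top-side event), hence only gains occur and $w(\mathit{Square}(x,y^\star)) \ge w(r)$; if no such $y^\star$ exists, do the symmetric argument downward with the top side. Setting $r' = \mathit{Square}(x, y^\star)$ gives the claim. The delicate point I expect to spend the most care on is ruling out the case where both sliding directions encounter a bottom-side loss before any top-side vertex is reached; I would handle it by first replacing $r$ with the member of its constant-weight vertical interval that is extremal in the chosen direction, so that the very first event encountered when leaving that interval is of the desired "edge enters through the top" type.
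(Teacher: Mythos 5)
Your target is the same as the paper's---slide the square vertically until its upper side reaches the height of a trajectory vertex, arguing that no weight is lost on the way---but the step you yourself flag as delicate is exactly where the proposal fails to close, and the paper's one\-/line proof shows that the difficulty never arises. The key fact you are missing is that the two vertical directions are \emph{not} symmetric. With the horizontal extent $[x,x+s]$ fixed, the weight is $\sum_h w_h\,\mathbf{1}[\,h-s\le y'\le h\,]$, summed over the edge heights $h$. Moving \emph{down} (decreasing $y'$), an edge can only be gained when the bottom side passes its height, and can only be lost when the top side passes its height; but the first moment the top side reaches an edge height is already a configuration in which the upper side has the $y$-coordinate of a vertex (every edge height of a horizontal trajectory is a vertex height), i.e., precisely where you stop. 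Since $w(r)>0$, some edge has height in $[y,y+s]$, so this stopping height $h^\ast=\max\{h: h\le y+s\}$ satisfies $y\le h^\ast\le y+s$, and every originally counted edge, having height in $[y,y+s]\subseteq[h^\ast-s,h^\ast]$, is still counted. The downward slide alone proves the lemma; there is no case ``both directions lose before a top-side vertex is reached'' to rule out. This is exactly the paper's argument.

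By contrast, your final packaging is phrased for the \emph{upward} slide, where the first event genuinely can be a loss (an edge just above the bottom side leaves before any vertex height reaches the top side), and the parenthetical ``choosing $x$ and the starting square minimally so that no bottom-side event precedes the first top-side event'' is not something you are free to arrange: $x$ has no effect on which heights are critical, and $r$ is given. Your proposed repair---replacing $r$ by the extremal point of its constant-weight interval---also does not force the next event in the chosen direction to be a gain. The event bookkeeping in your downward fallback is likewise reversed (moving down, edges \emph{enter} through the bottom and \emph{leave} through the top), although the conclusion you want there happens to be the correct one. Finally, note the statement only requires the $y$-coordinates to be equal, not that a vertex lie on the side, so stopping at any edge height suffices.
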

\begin{proof}
If a vertex of $H$ has the same $y$-coordinate as the
upper side of $r$, we are done.  Otherwise, we obtain
$r'$ by moving $r$ downwards until the first vertex of $H$
intersects the line containing the upper side of $r$.
Clearly, any part of any edge in $r$ is also
in $r'$, and therefore, $w(r') \ge w(r)$.
\end{proof}

Therefore, our goal in this section is to maintain the weight of $n$
specific squares as we move them with the sweep line in a plane sweep algorithm.
More precisely, we maintain the weight of $\mathit{Square}(x, y_i - s)$
for $1 \le i \le n$, where $y_i$ is the $y$-coordinate of the $i$-th
edge of the trajectory, as we move the sweep line ($x$ is variable
and denotes the position of the sweep line).

\begin{defn}
\label{dtrack}
For a horizontal trajectory $H$,
the $i$-th tracked square is $\mathit{Square(x, y_i - s)}$,
in which $y_i$ is the height of the $i$-th edge of the
trajectory and $x$ denotes the position of the sweep line.
The function $h_i\colon x \to w(\mathit{Square}(x, y_i - s))$
shows the weight of the $i$-th tracked square with respect
to $H$, when the sweep line is at $x$.
\end{defn}

We define a plane $P_H$ for horizontal trajectory $H$,
whose horizontal axis represents the position of the sweep
line and whose vertical axis represents the weight of tracked squares.
We add $n$ curves to $P_H$: the $i$-th curve is for the function $h_i$.
Suppose the highest point of the curves in $P_H$ is for curve $h_i$ at $x = a$.
Obviously, the square with the maximum weight is $\mathit{Square}(a, y_i - s)$.
Thus, to find a hotspot of the trajectory, we can compute
the upper envelope of $P_H$.
The plane $P_H$ is shown for an example horizontal trajectory
in Figure~\ref{Fenv}; the second tracked square at $x = a$ achieves
the maximum weight.

\begin{figure}
	\centering
	\includegraphics[width=\linewidth]{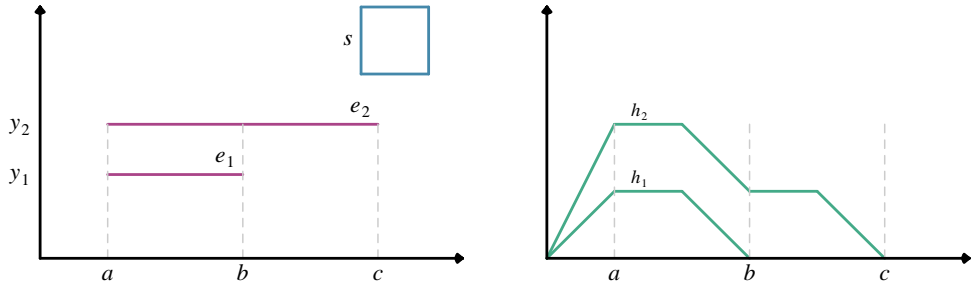}
	\caption{An example horizontal trajectory $H$ (left) and the plane $P_H$ (right) showing
	the weight of tracked squares}
	\label{Fenv}
\end{figure}

One solution for computing the upper envelope of $P_H$ is using kinetic
data structures.
To use the common terminology of kinetic data structures,
let the $x$-axis in $P_H$ denotes the time.
We want to maintain the point with the maximum height in any of the curves
as we move forward in time.
For this purpose, we can use kinetic tournament trees \cite{basch99},
which we shall briefly describe as follows.
We use an arbitrary binary tree with $n$ leaves, in each of the nodes of
which we store a function; let $f_v$ denote the function stored
at node $v$.
We initialize the tree for time $x = t_0$ as follows.
The function describing the $n$ curves of $P_H$ ($h_i$ for $1 \le i \le n$)
are stored in the leaves of the tree in an arbitrary order.
For each non-leaf node $v$, $f_v$ is recursively initialized as follows.
Let $u_1$ and $u_2$ be the children of $v$.
Without loss of generality, suppose $f_{u_1}(t_0) \ge f_{u_2}(t_0)$.
We set $f_v = f_{u_1}$ and call $f_{u_1}$ the winner at $v$ and
give it a winning certificate.  The winning certificate may not be
valid indefinitely.
Let $t_1$ be the earliest time in the future ($t_1 > t_0$),
at which $f_{u_2}(t_1)$ gets larger than $f_{u_1}(t_1)$.
We say that the winning certificate of $f_{u_1}$ fails at time $t_1$.
At this point, the winner at $v$ should be updated.
The certificate failure times (at which \emph{failure events} occur)
of all non-leaf nodes of the tree
are stored in a priority queue (the \emph{event queue}) and
are processed ordered by their time.
When the next certificate fails, the winner and the certificate failure
time of the corresponding node and its parent is updated.
Therefore, the function with the maximum value is always stored
at the root of the tree at any point of time.

To make the computation of certificate failure times more efficient,
we store a linear function at each node.
However, the functions $h_i$ (for $1 \le i \le n$) are piecewise linear
(the break point happen when a trajectory edge enters or leaves
$\mathit{Square}(x, y_i - s)$).
Consequently, the linear function assigned to the leaf corresponding to $h_i$
should be updated at the break points of $h_i$;
the time at which these updates should be performed (\emph{update events})
are also inserted into the event queue.
At each update event for curve $h_i$,
we need to update the function of the corresponding leaf; this
may change the winner and the failure time of the winning
certificate of the ancestors of the leaf.
The functions can be updated as in dynamic and kinetic tournament trees \cite{agarwal08}.

The main challenge that we try to address in this section is
reducing the cost of updating the tree.
An edge may intersect $\mathit{Square}(x, y_i - s)$ for $O(n)$
different values of $i$.  On the other hand, since there are $O(n)$ edges,
during the plane sweep algorithm we need to update the functions
of the leaves of the tree $O(n^2)$ times.
This makes the complexity of the plane sweep algorithm $\Omega(n^2)$.
To handle these updates more efficiently, we use a segment tree as
the underlying data structure for the kinetic tournament tree.
The details, correctness, and complexity of this algorithm is shown
in the rest of this section.

We close this section with a brief introduction to segment trees.
For a detailed introduction, the reader may consult classic
texts such as \cite{berg08}.
A segment tree is a balanced binary tree that can be used for answering
stabbing queries for a set of segments (or intervals) on a line.
It is initialized with a set of $n$ segments.
The start and end points of these segments split the line into
many \emph{elementary intervals}.
These intervals form the leaves of the segment tree in sorted order.
To each node $v$, an interval is assigned $\mathrm{Int}(v)$, representing
the union of the intervals of the leaves of the subtree rooted at $v$.
Each node of the segment tree $v$ also stores a subset of the input
segments $I(v)$.  An example segment tree is shown in
Figure~\ref{Fseg}.  In this figure the interval near each vertex $v$
is $\mathrm{Int}(v)$ and the set near each vertex is $I(v)$ (empty sets are
not shown).

\begin{figure}[ht]
	\centering
	\includegraphics[width=\linewidth]{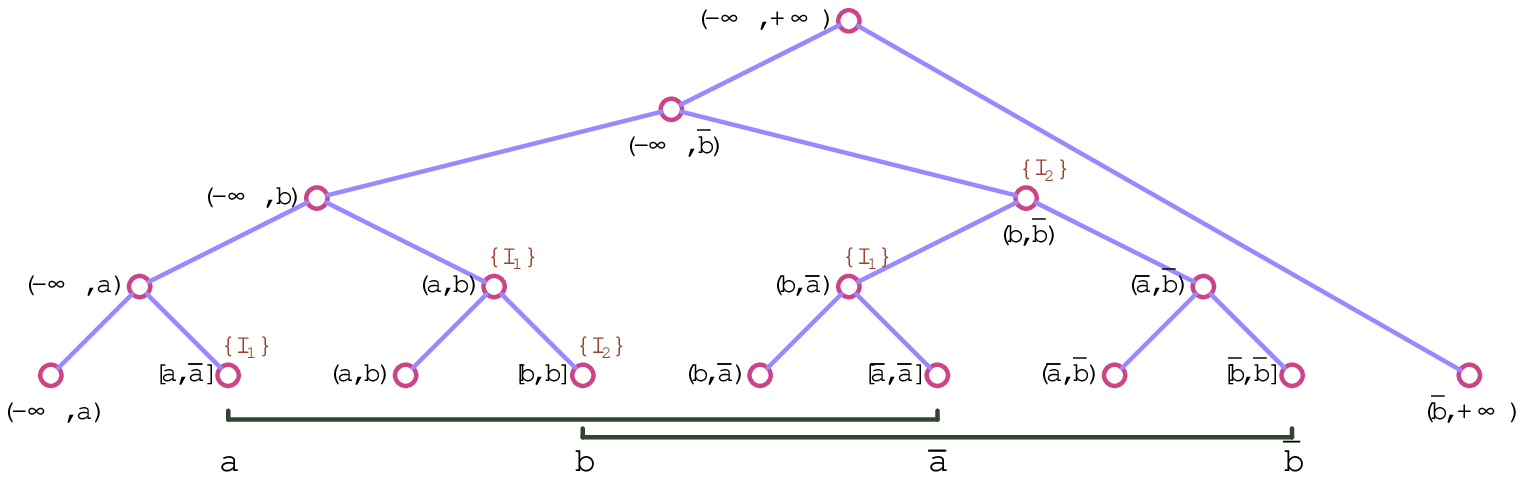}
	\caption{A segment tree for segments $I_1 = [a, \overline{a}]$ and $I_2 = [b, \overline{b}]$,
	where $a < b < \overline{a} < \overline{b}$}
	\label{Fseg}
\end{figure}

\begin{defn}
\label{Dsegleaf}
Let $[a, a']$ be a segment inserted into a segment tree.
In one of the leaves $v$ of the tree, we have
$\mathrm{Int}(v) = [a, a]$.  We call this leaf,
the segleaf of segment $[a, a']$.
\end{defn}

To answer a stabbing query for value $q$ (reporting every segment
containing the value $q$), the nodes of the tree on a path from the
root to a leaf are traversed and every segment in $I(v)$ for every
$v$ in this path is reported; every output segment is reported exactly
once and the complexity of answering a query is $O(\log n + k)$,
where the size of the output (the number of intervals containing $x$)
is $k$.

\subsection{The Algorithm}
\label{ssmain}
We use a segment tree $T$ in our plane sweep algorithm.
We assume that the input horizontal trajectory $H$
has $n$ edges and the $y$-coordinate of the $i$-th edge is $y_i$.
$T$ is initialized with $n$ segments: the $i$-th segment $g_i$
is $(y_i - s, y_i)$, corresponding to the $i$-th trajectory edge $e_i$.
A sweep square $\mathit{Square}(x, q)$ intersects $s_i$
during the plane sweep algorithm, if $q$ intersects $g_i$.
This yields the following key observation.

\begin{obs}
\label{Ostab}
A stabbing query for value $y = q$ on $T$ reports
every edge intersecting the sweep square $\mathit{Square}(x, q)$.
\end{obs}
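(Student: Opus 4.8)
The plan is to connect the stabbing‑query structure of the segment tree with a kinetic tournament, so that the $O(n^2)$ worst‑case cost of naive per‑edge updates is avoided. The statement to prove is Observation~\ref{Ostab}, which is really a restatement of how the segments $g_i = (y_i - s, y_i)$ were chosen: a sweep square $\mathit{Square}(x, q)$ has its lower side at height $q$ and its upper side at height $q + s$, so the $i$-th horizontal edge $e_i$, which lies entirely at height $y_i$, can possibly contribute to $\mathit{Square}(x, q)$ only when $q \le y_i \le q + s$, i.e.\ when $y_i - s \le q \le y_i$. First I would make this vertical‑overlap condition explicit: the set of values of $q$ for which $e_i$ lies in the horizontal slab of $\mathit{Square}(x, q)$ is precisely the interval $[y_i - s, y_i] = g_i$ (up to endpoint conventions matching Definition~\ref{dtrack} and the half‑open intervals used when $T$ was initialized).

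Next I would recall the semantics of a stabbing query on the segment tree $T$ from Section~\ref{ssoverview}: querying with the value $y = q$ traverses the root‑to‑leaf path ending at the elementary interval containing $q$, and reports exactly those input segments $g_i$ with $q \in g_i$, each exactly once. Combining this with the previous paragraph, the segments reported are exactly the indices $i$ with $y_i - s \le q \le y_i$, which is exactly the set of edges whose height lies in $[q, q+s]$. Since an edge can intersect $\mathit{Square}(x, q)$ only if its height is in this slab, and for the edges whose height is in the slab the horizontal extent of the intersection is governed entirely by the contribution function $c_{e_i}(x)$ from Table~\ref{tsweep} (which may of course be zero if the edge is horizontally disjoint from the square), the reported set is precisely the set of edges ``intersecting'' $\mathit{Square}(x, q)$ in the sense used here — that is, the candidate edges whose contribution to $w(\mathit{Square}(x,q))$ must be summed.

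I would be careful about one subtlety: whether ``intersecting'' in the statement means geometric intersection of the edge with the closed square, or membership in the candidate set whose contribution functions we sum. These differ only for edges at exactly the right height or horizontally flush with a side, and for the algorithm it is the candidate‑set reading that matters; I would phrase the observation so that it literally says ``every edge whose height lies in $[q, q+s]$,'' and note that this is exactly the set of edges that can contribute to the weight of $\mathit{Square}(x,q)$, for any position $x$ of the sweep line. The only remaining point is to confirm consistency of endpoint conventions — half‑open segments $[y_i-s, y_i)$ versus closed ones — so that each tracked square's upper‑side‑at‑a‑vertex configuration from Lemma~\ref{ltrack} is counted in the intended node; this is a matter of fixing the convention once, not a genuine obstacle.

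The main obstacle is essentially bookkeeping rather than mathematics: making sure the interval $g_i$ assigned to $e_i$ is exactly the set of lower‑side heights $q$ for which $e_i$ sits in the slab of $\mathit{Square}(x,q)$, with endpoint conventions chosen consistently with Definition~\ref{dtrack} and with the later use of the segment tree's $I(v)$ sets to store partial sums of contribution functions. Once that alignment is pinned down, the observation follows immediately from the standard correctness of segment‑tree stabbing queries, so the proof is short.
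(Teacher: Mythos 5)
Your proposal is correct and matches the paper's (implicit) justification: the paper simply notes that $\mathit{Square}(x,q)$ spans heights $[q,q+s]$, so edge $e_i$ at height $y_i$ lies in that slab exactly when $q \in g_i = (y_i - s, y_i)$, and then invokes the standard semantics of segment-tree stabbing queries. Your extra remark that the query may also report edges in the slab but horizontally disjoint from the square (whose contribution functions are then zero) is a valid and worthwhile clarification, but it does not change the argument, since the observation only claims that every intersecting edge is reported.
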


An example is shown in Figure~\ref{Fseg2}.
For each edge, a segment of length $s$ is inserted into the
segment tree.
A stabbing query for $y = q$, yields the segments intersecting
the sweep square whose lower side has height $q$.

\begin{figure}[ht]
	\centering
	\includegraphics[width=\linewidth]{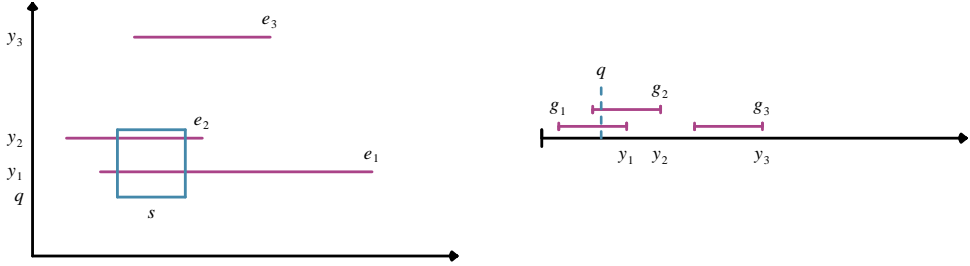}
	\caption{A horizontal trajectory (left) and the segments inserted
		into the segment tree (right)}
	\label{Fseg2}
\end{figure}
For each node $v$, as in regular segment trees, we represent the set of
segments stored at $v$ as $I(v)$ and the union of the intervals
of the leaves of the subtree rooted at $v$ as $\mathrm{Int}(v)$.
For a node $v$ of $T$, $v_y$ denotes the midpoint of $\mathrm{Int}(v)$.
With slight abuse of notation, for a leaf $v$ we sometimes
use $v$ to refer to $\mathit{Square}(x, v_y)$, and by the
weight of $v$, we mean $w(\mathit{Square}(x, v_y))$.

During the algorithm, we maintain the contribution function
of each edge (Definition~\ref{Dcontrib}) and update
it according to Table~\ref{tsweep}.
We sometimes use $c_{g_i}(x)$ to refer to
the contribution function of $e_i$, $c_{e_i}(x)$,
and refer to it as the contribution function of segment $g_i$.
We maintain the following attributes for each node $v$ of $T$
during the algorithm.

\begin{description}
\item[$s_v(x)$]:
The \emph{sum function} $s_v(x)$ is the sum of the functions assigned to
the segments in $I(v)$, i.e.~$s_v(x) = \sum_{g\in I(v)}c_g$.
Since $s_v$ is linear, it is enough to store its slope and vertical
intercept for each node $v$.
\item[$f_v(x)$]:
The \emph{winner function} $f_v(x)$ is equal to $s_v(x)$ for leaves and,
for other nodes, is the sum of $s_v(x)$ and the maximum
function $f_u(x)$ (for the current sweeping position $x$), among
any child $u$ of $v$.
Like $s_v$, $f_v$ is the sum of linear functions, and thus, linear.
\item[$\mathrm{winner}(v)$]:
The winner leaf $\mathrm{winner}(v)$ for a leaf node $v$
equals $v$ itself.  For other nodes,
$\mathrm{winner}(v)$ equals $\mathrm{winner}(u)$, if $u$ is the child
of $v$ with the maximum value of $f_u$ for the current sweep line
position (therefore, $f_v = f_u + s_v$).
\end{description}

Sweeping starts at $x = -\infty$, at which the functions of all
segments are zero.
We use a priority queue $Q$ to store sweeping events.  There are
two types of events: the failure of the winning certificate of
a node of $T$ (failure events) and a trajectory edge entering or
leaving tracked squares (update events).

At a failure event for node $v$, the winner, the winner function,
and the certificate failure time of $v$ and some of its ancestors
are updated, as in regular kinetic tournament trees.
At an update event for edge $e_i$, $c_{e_i}$ changes according
to Table~\ref{tsweep}, and the function $s_u$ for every $u$ such
that $g_i \in I(u)$ should be updated to reflect this change.
After updating $s_u$, $f_u$ also needs to be updated.
Since the updated function may change the winner and the failure
time of $u$'s parents, they should also be updated as in failure
events.

\begin{lem}
\label{Lleafweight}
For a leaf $v$, let $P = p_1 p_2 ... p_h$ be the path from
the root of $T$ to $v$, in which $p_h = v$.
Then, the weight of the square at $v$ is $\sum_{i=1}^{h} s_{p_i}$
during the plane sweep algorithm.
\end{lem}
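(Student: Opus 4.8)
The plan is to rewrite the weight of the square sitting at leaf $v$ as a sum of edge-contribution functions and then regroup that sum along the root-to-leaf path $P$ using the canonical-decomposition property of segment trees. First I would note that, by the definition of the contribution function, at sweep position $x$ the weight $w(\mathit{Square}(x,v_y))$ equals $\sum_i c_{e_i}(x)$, where the sum ranges over exactly those edges $e_i$ that intersect $\mathit{Square}(x,v_y)$. By Observation~\ref{Ostab} (equivalently, by the way the segments $g_i$ were defined), an edge $e_i$ intersects $\mathit{Square}(x,v_y)$ precisely when $v_y \in g_i$; and since $v$ is a leaf, Definition~\ref{Dsegleaf} gives $\mathrm{Int}(v)=[v_y,v_y]$, so this condition is the same as $\mathrm{Int}(v)\subseteq g_i$. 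Hence the weight of the square at $v$ equals $\sum\{\,c_{g_i}(x) : \mathrm{Int}(v)\subseteq g_i\,\}$.

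Next I would invoke the standard structural property of segment trees: for each inserted segment $g_i$, the set of nodes $u$ with $g_i\in I(u)$ forms an antichain whose intervals $\mathrm{Int}(u)$ partition $g_i$; consequently, for a leaf $v$ the root-to-$v$ path $P=p_1\ldots p_h$ contains exactly one node storing $g_i$ when $\mathrm{Int}(v)\subseteq g_i$ and no node storing $g_i$ otherwise. This lets me rewrite the index set $\{\,i:\mathrm{Int}(v)\subseteq g_i\,\}$ as the disjoint union over $j$ of $\{\,i:g_i\in I(p_j)\,\}$, so that, using $s_{p_j}=\sum_{g\in I(p_j)}c_g$,
\[
w(\mathit{Square}(x,v_y)) \;=\; \sum_{j=1}^{h}\sum_{g\in I(p_j)} c_g(x) \;=\; \sum_{j=1}^{h} s_{p_j}(x).
\]
Since $x$ was an arbitrary sweep position, this identity holds throughout the plane sweep, which is exactly the claim.

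The only real content is the segment-tree regrouping step, so the main point to be careful about is the ``exactly once on every root-to-leaf path'' property --- that the nodes storing a given segment form an antichain that covers the segment exactly. I would either cite this from the standard treatment (e.g.~\cite{berg08}) or give the short inductive argument: when a segment is inserted it is recorded at a node $u$ only if $\mathrm{Int}(u)$ is contained in it, and is otherwise pushed down only into those children whose intervals are not fully contained in it, so along any single downward path the segment is recorded at the first node whose interval it covers and at no node below it. Everything else in the proof is routine bookkeeping with the linear functions $c_g$ and $s_v$.
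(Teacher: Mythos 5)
Your proof is correct and follows essentially the same route as the paper's: both express the square's weight as the sum of contribution functions of the intersecting edges and then use the segment-tree property that each such segment is stored in exactly one node of the root-to-leaf path (the paper phrases this via the stabbing query for $v_y$, you phrase it via the canonical-decomposition antichain), so the regrouping into $\sum_{i=1}^{h} s_{p_i}$ is the same step. Your version is slightly more explicit about why the ``exactly once'' property holds, but there is no substantive difference.
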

\begin{proof}
Answering a stabbing query for a given value $y$,
requires a traversal from the root of the tree to a leaf and
reporting every segment stored in the nodes of the traversal.
Therefore, a query for the value of $v_y$ reports every segment
in $I(p_i)$ for $1 \le i \le h$.
To compute the weight of $v$, we need to sum up the contribution
of each intersecting segment.
Since, $s_{p_i}$ is the sum of the functions of the segments in $I(p_i)$
and the label of every intersecting segment is stored in exactly
one node of $P$, $\sum_{i=1}^{h} s_{p_i}$ is the total
contribution of the segments to the weight of the square.
\end{proof}

\begin{lem}
\label{Lleafwinner}
Let $v$ be a leaf, $u$ be one of its ancestors in $T$,
and $u_1 u_2 ... u_h$ be the path from $u$ to $v$,
where $u_1 = u$ and $u_h = v$.
If $\mathrm{winner}(u)$ is $v$, we should have $f_u = \sum_{i=1}^h s_{u_i}$.
\end{lem}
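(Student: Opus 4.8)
The plan is to prove Lemma~\ref{Lleafwinner} by induction on $h$, the length of the path from $u$ down to the leaf $v$, mirroring the structure of Lemma~\ref{Lleafweight} but tracking the winner pointers instead of a plain stabbing query. The base case $h = 1$ is $u = v$, where by definition $\mathrm{winner}(v) = v$ and $f_v = s_v$, so the claimed identity $f_u = \sum_{i=1}^{1} s_{u_i} = s_v$ holds trivially. For the inductive step, suppose $\mathrm{winner}(u) = v$ with the path $u_1 u_2 \ldots u_h$, and let $u_2$ be the child of $u = u_1$ lying on this path. The key observation is that $\mathrm{winner}(u) = v$ forces $\mathrm{winner}(u_2) = v$ as well: by the definition of $\mathrm{winner}$, $\mathrm{winner}(u)$ equals $\mathrm{winner}(w)$ for whichever child $w$ of $u$ currently maximizes $f_w$, and since $v$ lies in the subtree of $u_2$ (not the other child), that maximizing child must be $u_2$, and hence $\mathrm{winner}(u_2) = v$.

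Given $\mathrm{winner}(u_2) = v$, the inductive hypothesis applies to the shorter path $u_2 u_3 \ldots u_h$, yielding $f_{u_2} = \sum_{i=2}^{h} s_{u_i}$. Now I invoke the defining relation for the winner function at a non-leaf node: because $u_2$ is the child of $u$ with maximal $f$-value, we have $f_u = f_{u_2} + s_u = f_{u_2} + s_{u_1}$. Substituting the inductive hypothesis gives $f_u = s_{u_1} + \sum_{i=2}^{h} s_{u_i} = \sum_{i=1}^{h} s_{u_i}$, which is exactly the claim. The same identity is consistent with the fact (from Lemma~\ref{Lleafweight}) that the right-hand side, summed all the way from the root, equals the weight of the square at $v$ — so when $u$ is the root, $f_{\mathrm{root}}$ correctly reports the weight of the currently-winning tracked square.

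The main thing to be careful about is that the "winner function" attribute is defined in terms of the \emph{current} sweep position $x$: the statement $f_v = f_u + s_v$ and the identity $\mathrm{winner}(v) = \mathrm{winner}(u)$ are assertions about a fixed moment during the sweep, and the recursion I unwind must be read at that same fixed $x$. So I would phrase everything "for the current sweep line position", exactly as in the definitions, and note that the functions $s_v$, $f_v$ are linear at that moment so the sums make sense as linear functions. I expect no real obstacle here — the lemma is essentially the "winner-pointer" refinement of Lemma~\ref{Lleafweight}, and the only subtlety is making the case distinction "which child of $u$ is on the path to $v$" explicit so that the appeal to the definition of $\mathrm{winner}$ is justified. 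A one-line remark that $f_v = s_v$ at leaves (so the leaf case of the sum is not a special case but the base of the induction) rounds out the argument.
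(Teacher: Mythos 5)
Your proof is correct and follows essentially the same route as the paper's: induction on the path length $h$, with base case $f_v = s_v$ at the leaf, the observation that $\mathrm{winner}(u)=v$ forces the child on the path to $v$ to be the maximizing child with $\mathrm{winner}(u_2)=v$, and the recursion $f_u = f_{u_2} + s_u$ combined with the inductive hypothesis. No gaps; your explicit justification that the maximizing child must be $u_2$ (because a node's winner lies in its own subtree) is the same step the paper handles with its ``without loss of generality'' remark.
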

\begin{proof}
We use induction on $h$, the number of the nodes of the
path from $u$ to $v$.  When $h = 1$, $u$ is a leaf and $f_u = s_u$.
For $h > 1$, let $x$ and $y$ be the two children of $u$.
$\mathrm{winner}(u)$ is either $\mathrm{winner}(x)$ or $\mathrm{winner}(y)$.
Therefore, either $\mathrm{winner}(x)$ or $\mathrm{winner}(y)$ is $v$.
Without loss of generality, suppose $\mathrm{winner}(x) = v$ (this
implies that $u_2 = x$).  Based on induction hypothesis, $f_x = \sum_{i=2}^h s_{u_i}$.
Since $f_u = f_x + s_u$, the statement follows.
\end{proof}

\begin{lem}
\label{Lmaxsquares}
A square at one of the leaves of $T$ has the maximum weight
among all sweep squares at any stage during the algorithm.
\end{lem}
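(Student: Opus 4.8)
The plan is to connect the weight of a leaf's square (as characterized in Lemma~\ref{Lleafweight}) with the winner function at the root (as characterized in Lemma~\ref{Lleafwinner}), and then argue that the leaf selected by $\mathrm{winner}(\mathrm{root})$ realizes the maximum over \emph{all} sweep squares. First I would recall that by Lemma~\ref{ltrack} it suffices to consider only tracked squares, i.e.\ sweep squares whose lower side has height $y_i - s$ for some $i$; each such square corresponds to the segleaf of segment $g_i$, so the maximum-weight sweep square is attained at some leaf of $T$. Then I would set $v^\star = \mathrm{winner}(\mathrm{root})$ and let $P = p_1 \dots p_h$ be the root-to-$v^\star$ path. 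Applying Lemma~\ref{Lleafwinner} with $u = p_1 = \mathrm{root}$ gives $f_{\mathrm{root}} = \sum_{i=1}^h s_{p_i}$, and applying Lemma~\ref{Lleafweight} to the leaf $v^\star$ gives that this same sum equals the weight of the square at $v^\star$. Hence $f_{\mathrm{root}}(x)$ equals the weight of the square at $v^\star$ at the current sweep position $x$.

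Next I would show $f_{\mathrm{root}}(x)$ dominates the weight of every leaf's square. This is where the kinetic-tournament invariant does the work: at every non-leaf node $v$ with children $a,b$, we have $f_v = s_v + \max(f_a, f_b)$ by the definition of the winner function. Unrolling this recursively down to an arbitrary leaf $w$, and using that along the root-to-$w$ path each node contributes its $s$-term while the $\max$ always upper-bounds the branch leading toward $w$, I get $f_{\mathrm{root}} \ge \sum_{p \in \text{path to } w} s_p$, and the right-hand side is exactly the weight of the square at $w$ by Lemma~\ref{Lleafweight}. (I should be slightly careful to state the unrolling as its own short induction on the depth of $w$, but it is routine.) Combining the two directions, the weight of the square at $v^\star$ equals $f_{\mathrm{root}}$ and is $\ge$ the weight of the square at every leaf, and since every sweep square of non-zero weight is dominated by some tracked square (a leaf's square), the square at $v^\star$ has the maximum weight among all sweep squares.

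The one subtlety I would be careful about — and the closest thing to an obstacle — is temporal validity: all of the above holds \emph{at the current sweep position $x$}, provided the stored winner functions and winner pointers are consistent with $x$. The segment-tree/kinetic-tournament machinery is precisely what maintains this consistency: failure events keep each $\mathrm{winner}(v)$ pointing to the child with the larger $f$, and update events keep each $s_u$ equal to $\sum_{g \in I(u)} c_g$ as the contribution functions change per Table~\ref{tsweep}. So I would phrase the lemma's conclusion as holding ``at any stage during the algorithm,'' meaning between consecutive events when all certificates are valid, and note that the invariants of Lemmas~\ref{Lleafweight} and~\ref{Lleafwinner} are re-established after each event by construction of the event handlers. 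With that caveat in place, the proof is just the two inequalities above, so I do not expect any real difficulty beyond stating the unrolling induction cleanly.
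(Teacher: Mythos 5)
Your proposal is correct, and its first two sentences are in fact the entirety of the paper's own proof: Lemma~\ref{ltrack} reduces the maximum over all sweep squares to the maximum over tracked squares, and each tracked square $\mathit{Square}(x, y_i - s)$ is realized at the segleaf of $g_i$, hence at a leaf of $T$. Everything after that in your write-up --- identifying $f_{\mathrm{root}}$ with the weight of the square at $\mathrm{winner}(\mathrm{root})$ via Lemmas~\ref{Lleafweight} and~\ref{Lleafwinner}, and the unrolling argument showing $f_{\mathrm{root}}$ dominates every leaf's weight --- is proving the stronger statement that is Theorem~\ref{Talg12}, not this lemma. That extra material is not wrong (and your unrolling of $f_v = s_v + \max(f_a,f_b)$ is arguably a cleaner route to Theorem~\ref{Talg12} than the paper's induction on node height), but it makes the lemma depend on the temporal validity of the tournament invariants, which the statement does not require: as the paper proves it, Lemma~\ref{Lmaxsquares} is a purely geometric fact about which squares can attain the maximum, independent of what the data structure stores. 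Since the paper subsequently \emph{uses} this lemma inside the proof of Theorem~\ref{Talg12} (to restrict attention to leaves), folding the winner-function machinery into the lemma's proof would muddy the logical structure; you should keep only the first part here and move the rest to the theorem.
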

\begin{proof}
Lemma~\ref{ltrack} implies that by moving a square downwards
until its upper side has the same height as a trajectory
vertex, its weight cannot decrease.
Therefore, a tracked square $\mathit{Square}(x, y_i - s)$
for some index $i$ ($1 \le i \le n$) has the maximum weight.
The segleaf of $g_i$ (Definition~\ref{Dsegleaf}), whose
weight is $\mathit{Square}(x, y_i - s)$,
surely appears as a leaf of $T$, as required.
\end{proof}

\begin{thm}
\label{Talg12}
$T$ stores a sweep square with the maximum weight at its root
during the plane sweep algorithm.
\end{thm}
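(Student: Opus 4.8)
The plan is to show that the root of $T$ always stores the winner function $f_{\mathrm{root}}$ whose value at the current sweep position $x$ equals the maximum weight over all sweep squares, and that $\mathrm{winner}(\mathrm{root})$ identifies a leaf achieving this maximum. By Lemma~\ref{Lmaxsquares}, it suffices to prove that among all leaves $v$ of $T$, the quantity $f_{\mathrm{root}}(x)$ equals $\max_v w(\mathit{Square}(x, v_y))$ at every stage. I would first invoke Lemma~\ref{Lleafweight}, which expresses the weight of the square at a leaf $v$ as $\sum_{i=1}^{h} s_{p_i}(x)$ along the root-to-$v$ path, so the target maximum is $\max_v \sum_{i=1}^{h} s_{p_i}(x)$. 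Next I would apply Lemma~\ref{Lleafwinner} with $u = \mathrm{root}$: it gives $f_{\mathrm{root}} = \sum_{i} s_{u_i}$ along the path from the root to $\mathrm{winner}(\mathrm{root})$, which is exactly the weight of the square at the leaf $\mathrm{winner}(\mathrm{root})$. Hence $f_{\mathrm{root}}(x)$ is attained as the weight of some tracked square, so $f_{\mathrm{root}}(x) \le \max_v w(\mathit{Square}(x, v_y))$.

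For the reverse inequality I would argue by induction on the tree that $f_u(x) \ge \sum_{i=1}^{h} s_{u_i}(x)$ for the path from $u$ to \emph{any} leaf $v$ in the subtree of $u$ — in other words $f_u$ dominates the weight contribution of every leaf below it. The base case is a leaf, where $f_u = s_u$ equals the one-term sum. For the inductive step, if $v$ lies below child $x$ of $u$, then by definition $f_u = s_u + \max(f_x, f_y) \ge s_u + f_x \ge s_u + \sum_{i\ge 2} s_{u_i}$, using the induction hypothesis on $f_x$. Taking $u = \mathrm{root}$ and ranging over all leaves $v$ gives $f_{\mathrm{root}}(x) \ge \max_v w(\mathit{Square}(x, v_y))$, and combined with the previous paragraph we get equality. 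Together with Lemma~\ref{Lmaxsquares} (which says some leaf square is globally maximum among all sweep squares, not merely among tracked ones), this shows the root stores a sweep square of maximum weight.

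Finally I would note that these invariants are maintained dynamically. At initialization ($x = -\infty$) all contribution functions, hence all $s_v$ and $f_v$, are zero, so the invariants hold trivially. The invariants are preserved across failure events because, as in standard kinetic tournament trees, a failure event at node $v$ recomputes $\mathrm{winner}(v)$ and $f_v = s_v + \max(f_{u_1}, f_{u_2})$ correctly and propagates the change up the ancestor path; and across update events because when $c_{e_i}$ changes (according to Table~\ref{tsweep}), every node $u$ with $g_i \in I(u)$ has its $s_u$ recomputed, after which $f_u$ and the ancestors' winners and certificates are updated just as in a failure event. Since the defining relations $f_v = s_v + \max_{u \text{ child of } v} f_u$ and the winner pointers are re-established after every event, the two inequalities above continue to hold at all sweep positions. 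The main obstacle is being careful that Lemma~\ref{Lleafwinner}'s hypothesis — that $\mathrm{winner}(u)$ actually points to the leaf realizing $f_u$ — is genuinely an invariant maintained by the event handling, rather than something that could silently drift; once that bookkeeping is pinned down, the rest is a short combination of the preceding lemmas.
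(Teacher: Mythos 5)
Your proposal is correct and follows essentially the same route as the paper: both arguments reduce to the leaves via Lemma~\ref{Lmaxsquares}, decompose a leaf's weight into the path sum of $s$-values (Lemma~\ref{Lleafweight}), identify $f$ with the suffix of that sum below the winner (Lemma~\ref{Lleafwinner}), and induct over the tree. The only cosmetic difference is that you phrase the induction as a domination inequality plus attainment at the root, whereas the paper proves the per-node invariant that $\mathrm{winner}(v)$ is the maximum-weight leaf in the subtree of $v$; these are equivalent.
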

\begin{proof}
For the subtree rooted at any node $v$ of $T$ during the algorithm,
we show that $\mathrm{winner}(v)$ is the square with the
maximum weight among the squares, the height of whose lower side
is in the interval $\mathrm{Int}(v)$ (that is, we show that the
statement is true for every node and not just the root).
Lemma~\ref{Lmaxsquares} implies that we need to consider only the
leaves of the subtree rooted at $v$.  Therefore, we show that a
leaf with the maximum weight appears as the winner of $v$.

We use induction on the height (the distance from the leaves) of the
nodes to show that the property holds for every node.
For leaves, the statement is trivially true.
Let $v$ be a node with children $x$ and $y$.
We denote by $T_w$ the subtree rooted at node $w$.
Then, every leaf of $T_v$ is a leaf in either $T_x$ or $T_y$.
By induction hypothesis, a leaf with the maximum weight in $T_x$
and $T_y$ appears as $\mathrm{winner}(x)$ and $\mathrm{winner}(y)$, respectively.
Therefore, the square with the maximum weight in $T_v$, $v'$, is
either $\mathrm{winner}(x)$ or $\mathrm{winner}(y)$.

Let $x' = \mathrm{winner}(x)$ and $y' = \mathrm{winner}(y)$
and let $p_x = x_1 x_2 ... x_{h_x}$
and $p_y = y_1 y_2 ... y_{h_y}$
be the path from the root of $T$ to $x'$ and $y'$, respectively
($x_1 = y_1$ is the root, $h_x$ is the depth of $x'$, $h_y$ is the
depth of $y'$, $x_{h_x} = x'$ and $y_{h_y} = y'$).
Both $p_x$ and $p_y$ include $v$; let $x_i = v$.
Since $p_x$ and $p_y$ diverge at $v$, $x_j = y_j$ for
every integer $j$ such that $1 \le j \le i$.
Based on Lemma~\ref{Lleafweight},
the weight of $x'$ is $\sum_{j=1}^{i} s_{x_j} + \sum_{j=i+1}^{h_x} s_{x_j}$
and the weight of $y'$ is $\sum_{j=1}^{i} s_{x_j} + \sum_{j=i+1}^{h_y} s_{y_j}$.
Also based on Lemma~\ref{Lleafwinner}, $f_x = \sum_{j=i+1}^{h_x} s_{x_j}$
and $f_y = \sum_{j=i+1}^{h_y} s_{y_j}$.
Therefore, $v'$ is $x'$ if $f_x > f_y$ and $y'$, otherwise.
This implies that $v'$ is the same as $\mathrm{winner}(v)$,
since $\mathrm{winner}(v)$ is chosen based on the value of $f_x$ and $f_y$.
This completes the proof.
\end{proof}

The main challenge in the analysis of the sweeping algorithm is limiting
the number of failure events.
In a dynamic and kinetic tournament tree for movement functions with
degree at most $s$,
using a balanced binary tree and when implementing each update as a deletion
followed by an insertion, the number of events is $O(\frac{m}{n}\lambda_{s+2}(n) \log n)$,
where $m$ is the number of updates and $\lambda_s(n)$ is the
maximum length of Davenport-Schinzel sequences of order $s$ on $n$ symbols \cite{agarwal08}.
For our problem, this yields a poor bound, since each update event may update
the weight of $O(n)$ leaves and thus $m = O(n^2)$,
which implies that the total number of failure events is $O(n^2\alpha(n) \log n)$,
in which $\alpha(n)$ denotes the inverse Ackermann function.
In Theorem~\ref{Tanalysis12} we present a tighter bound.

\begin{thm}
\label{Tanalysis12}
The time complexity of the plane sweep algorithm for finding a
hotspot of a horizontal trajectory with $n$ edges is $O(n \log^3 n)$
\end{thm}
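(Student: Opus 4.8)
The plan is to bound separately the number of \emph{update events} and the number of \emph{failure events}, and then multiply the total event count by the $O(\log n)$ per-event cost plus the extra factors coming from the kinetic tournament machinery. For update events: each edge $e_i$ generates $O(1)$ breakpoints in its own contribution function $c_{e_i}$ (the four event types of Table~\ref{tsweep}), so there are $O(n)$ update events in total. The subtle point is that one update event for $e_i$ must refresh $s_u$ for every node $u$ with $g_i \in I(u)$; by the standard segment-tree property a segment is stored in $O(\log n)$ canonical nodes, so each update event touches $O(\log n)$ nodes, and after refreshing $s_u$ at such a node we must propagate the change up toward the root, updating $f$ and possibly re-deciding winners along $O(\log n)$ ancestors. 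Thus the direct work per update event is $O(\log^2 n)$, contributing $O(n\log^2 n)$ in total before accounting for newly scheduled failure events.

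The heart of the argument — and the step I expect to be the main obstacle — is bounding the number of failure events. The key structural observation is that the kinetic tournament is not a single tournament of height $\Theta(\log n)$ over $n$ moving linear functions; instead, along each root-to-leaf path of the segment tree the ``competing'' quantities are the partial sums $f_x = \sum s$ over subpaths, and the number of these competitions is controlled level by level. I would analyze the tournament \emph{per level} of the segment tree: at a fixed level there are $O(n)$ nodes, the winner function at each node is linear between consecutive update events, and each update event changes only $O(\log n)$ of these linear functions (the canonical nodes of one segment and their ancestors on that level). Invoking the dynamic-and-kinetic-tournament bound of Agarwal et al.~\cite{agarwal08} for linear (degree-$1$) functions — where the relevant Davenport–Schinzel bound is $\lambda_3(n)=O(n\alpha(n))$ — with $m=O(n\log n)$ updates \emph{per level} gives $O\!\left(\frac{m}{n}\lambda_3(n)\log n\right) = O(\log n \cdot \alpha(n) \cdot \log n)$ per level, hence $O(n\alpha(n)\log^2 n)$ failure events summed over all $\Theta(\log n)$ levels. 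Wait — that overshoots; the correct accounting is that the $\log n$ updates are distributed over the $O(n)$ nodes of the level, so the per-level event count is $O(\lambda_3(n)\log n)$ total, i.e.~$O(n\alpha(n)\log n)$ per level and $O(n\alpha(n)\log^2 n)$ overall. Since $\alpha(n)=O(\log n)$ trivially (indeed much smaller), this is $O(n\log^3 n)$ failure events.

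Finally I would assemble the pieces: the total number of events is $O(n\log^3 n)$, dominated by failure events; each event (failure or update) is processed in $O(\log n)$ amortized time for updating winners, winner functions, and certificate failure times along one root-to-leaf path and for recomputing intersection times of linear functions, and each event may insert $O(1)$ new failure events into the priority queue $Q$ at $O(\log n)$ cost each. Correctness that the root always holds a maximum-weight sweep square is already established in Theorem~\ref{Talg12} and Lemma~\ref{Lmaxsquares}, so only the counting is needed here. The delicate part to get right in the write-up is the level-by-level charging: one must verify that an update event for $e_i$ affects, within any single level, only $O(1)$ canonical nodes plus $O(1)$ ancestors of the two canonical paths, so that the per-level update count is genuinely $O(n)$ rather than $O(n\log n)$ — if that holds, the bound tightens and the stated $O(n\log^3 n)$ follows with room to spare; I will present the conservative $O(n\log n)$-per-level version since it already suffices.
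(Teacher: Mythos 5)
Your treatment of the update events is essentially the paper's: $O(n)$ breakpoints, each touching the $O(\log n)$ canonical nodes of one segment and propagating along $O(\log n)$ ancestors. The gap is in your bound on the failure events. Your ``per level'' invocation of the Agarwal et al.~dynamic-kinetic-tournament bound is not well-founded: the structure here is a \emph{single} tournament over the segment-tree leaves in which every internal node carries an additive offset $s_v$, and an update event modifies these internal offsets rather than inserting or deleting leaf elements, which is the operation the cited $O(\frac{m}{n}\lambda_{s+2}(n)\log n)$ bound is stated for. A fixed level of the segment tree does not host a tournament at all --- the winner at a level-$k$ node is determined by sums of $s$-values along entire root-to-leaf subpaths --- so there is no family of $O(n)$ competing linear functions per level to which $\lambda_3$ applies. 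Moreover, even granting your count, the arithmetic does not reach the stated bound: $O(n\alpha(n)\log^2 n)$ failure events at $O(\log n)$ each gives $O(n\alpha(n)\log^3 n)$, and after your substitution $\alpha(n)=O(\log n)$ at the counting stage you end up multiplying $O(n\log^3 n)$ events by $O(\log n)$ per event, i.e.~$O(n\log^4 n)$. Either way the theorem's $O(n\log^3 n)$ is not established.

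The idea you are missing is the paper's direct charging argument, which avoids Davenport--Schinzel sequences entirely. Count \emph{winner changes} rather than failure events. Because all winner functions are linear, once $\mathrm{winner}(v)$ switches away from a leaf $u$, the two competing linear functions at $v$ have already crossed and $u$ cannot win at $v$ again until some update event perturbs one of the relevant functions; hence, absent updates, each leaf wins at each of its $O(\log n)$ ancestors at most once, for $O(n\log n)$ winner changes in total. An update event for a segment $g$ modifies $s_v$ only at the $O(\log n)$ nodes with $g\in I(v)$; within each such subtree the relative order of the leaves is unchanged (the shift is additive), so the update only ``resets'' the competitions at the $O(\log n)$ ancestors of each modified node, permitting $O(\log^2 n)$ additional winner changes per update event and $O(n\log^2 n)$ overall. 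Each winner change costs $O(\log n)$ for the priority-queue maintenance, giving $O(n\log^3 n)$.
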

\begin{proof}
After a failure event for a node and updating its winner and winner function,
we update its parent (unless it is the root).
At each update for node $v$, we check if the winner at $v$ needs to
be changed.  If so, we also update its parent.  Otherwise, we stop.
Therefore, instead of limiting the number of failure events,
we can find an upper bound on the total number of winner changes at
different nodes of $T$.

Let $\mathrm{winner}(v) = u$ at some point in the algorithm,
in which $u$ is a leaf.  Since weight functions are linear,
when $\mathrm{winner}(v)$ changes to a value $w$, where $w \ne u$,
$u$ can never become a winner at $v$, unless an update event updates
the weight of $w$ or $v$.
Without the update events, therefore, the number of times a leaf can become
a winner in its parent nodes is $O(\log n)$.  This implies a total
of $O(n \log n)$ winner changes.  It remains to limit the number of
winner changes that can result from update events.

Suppose an update event for edge $e$ updates the function assigned
to segment $g$.
Let $S$ be the set of all nodes like $v$ in $T$ such that $g \in I(v)$.
For every node $v$ in $S$ the sum and winner functions are updated.
This change does not cause any winner change in $T_v$ (the subtree
rooted at $v$),
because the relative weight of its leaves does not change.
However, the new winner function of $\mathrm{winner}(v)$
may cause $O(\log n)$ future winner changes in the ancestor of $v$.
In segment trees one can show that the label of each segment
appears in $O(\log n)$ nodes (for details, see \cite{berg08})
and thus the size of $S$ is $O(\log n)$.
Therefore, the number of winner changes by each update event
is $O(\log^2 n)$ and, since there are $O(n)$ update events,
the total number of winner changes induced by the update events
is $O(n \log^2 n)$.
Since for each winner change at node $v$, the failure time of $v$ is
updated in $Q$, the cost of performing each winner change is $O(\log n)$.
Thus, the time complexity of the algorithm is $O(n \log^3 n)$.
\end{proof}

\begin{thm}
\label{Tsweep12}
There exists an approximation algorithm with the approximation factor $1/2$
and time complexity $O(n \log^3 n)$ for finding a hotspot of
an orthogonal trajectory with $n$ edges.
\end{thm}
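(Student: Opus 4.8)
The plan is to reduce the orthogonal case to two instances of the horizontal case and then invoke Theorem~\ref{Tanalysis12}, incurring only the factor of $1/2$ that comes from Lemma~\ref{hdecomp}. Concretely, I would first partition the edge set of the input orthogonal trajectory $T$ into the set $H$ of its horizontal edges and the set $V$ of its vertical edges; this takes $O(n)$ time and, by Lemma~\ref{hdecomp}, $\max(h(H), h(V)) \ge h(T)/2$.

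Next I would run the plane sweep algorithm underlying Theorem~\ref{Talg12} and Theorem~\ref{Tanalysis12} on $H$; this produces the position of a sweep square $r_H$ whose weight equals $h(H)$, in time $O(n \log^3 n)$. I would then apply the same algorithm to $V$ after rotating the plane by $90^\circ$, so that the vertical edges of $T$ become a horizontal trajectory; since a $90^\circ$ rotation maps axis-parallel squares of side length $s$ to axis-parallel squares of side length $s$ and preserves the time spent inside them, rotating the resulting square back yields a square $r_V$ with $w(r_V) = h(V)$, again in $O(n \log^3 n)$.

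Finally, I would compare $w(r_H)$ and $w(r_V)$ and output the heavier of the two squares, say $r$. Then $w(r) = \max(h(H), h(V)) \ge h(T)/2$, so $r$ is a $1/2$-approximate hotspot of $T$. The total running time is $O(n \log^3 n)$ for the two sweeps plus $O(n)$ for the partition and the final comparison, i.e.~$O(n \log^3 n)$.

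I do not expect any real obstacle here. Unlike the $1/4$-approximation of Theorem~\ref{tapprox2}, which loses an extra factor of $2$ because it searches only squares with a corner on a trajectory vertex (Lemma~\ref{Lcorner}), the algorithm of Theorem~\ref{Tanalysis12} finds an \emph{exact} hotspot of a horizontal trajectory, so no further approximation error is introduced beyond the horizontal/vertical split. The only point needing a line of justification is that a $90^\circ$ rotation is a weight-preserving bijection on the set of candidate squares, which is immediate.
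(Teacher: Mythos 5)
Your proposal is correct and is exactly the argument the paper intends: its proof of this theorem is a one-line citation of Theorem~\ref{Talg12}, Theorem~\ref{Tanalysis12}, and Lemma~\ref{hdecomp}, which you have simply spelled out (partition into $H$ and $V$, run the exact horizontal sweep on each after a $90^\circ$ rotation for $V$, and return the heavier square). No gaps.
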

\begin{proof}
Follows from Theorem~\ref{Talg12}, Theorem~\ref{Tanalysis12}, and Lemma~\ref{hdecomp}.
\end{proof}

\section{Extension to Three Dimensions}
\label{scon}
Any algorithm used for finding hotspots in 2-dimensions can be extended to
find axis-parallel, cube hotspots of fixed side length for orthogonal
trajectories in $\mathbb{R}^3$.
We extend the definitions and notations presented in
Section~\ref{sprel} to $\mathbb{R}^3$.
The weight of a cube $c$ with side length $s$ with respect
to trajectory $T$ in $\mathbb{R}^3$ is the total duration in which the
entity spends inside it; we represent it as $w_T (c)$, as before.
A hotspot of a trajectory $T$ in $\mathbb{R}^3$ is an axis-parallel
cube (i.e.~a cube whose faces are parallel to the planes defined by
any pair of the axes of the coordinate system) of fixed
side length $s$ and the maximum weight, $h(T)$.

Let $e$ be an edge parallel to the $z$-axis and let $c$ be an
axis-parallel cube.  Exactly two faces of $c$ are parallel to the $xy$-plane,
$Z_1(c)$ and $Z_2(c)$, with $Z_1(c)$ appearing first (in the positive direction of
the $z$-axis).  The contribution rate of $e$ with respect to
$c$, denoted as $c_z(e)$, is the rate at which the contribution of the
weight of $e$ to the weight of $c$ increases if $c$ is moved in the
positive direction of the $z$-axis.
As in the 2-dimensional case, the absolute value of $c_z(e)$ is
either zero or the ratio of the duration of $e$ to its length,
which we denote as $d(e)$.
We define $c_z(T)$ for orthogonal trajectory $T$ as the sum of the
contribution rates of all edges of $T$ that are parallel to the $z$-axis.
The following lemma extends Lemma~\ref{hside} to three dimensions.

\begin{lem}
\label{tside3d}
Let $T$ be a trajectory in $\mathbb{R}^3$ with axis-parallel edges.
For any axis-parallel cube like $c$, there is a cube with at
least the same weight,
such that a vertex of $T$ is on one of the two planes formed
by extending its $xy$-parallel faces.
\end{lem}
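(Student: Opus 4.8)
The plan is to mimic the proof of Lemma~\ref{hside} almost verbatim, replacing "vertical side" by "$xy$-parallel face" and "moving right/left" by "moving in the positive/negative direction of the $z$-axis." First I would fix an arbitrary axis-parallel cube $c$ and suppose, for contradiction-free reasoning, that neither of the two horizontal planes through the faces $Z_1(c)$ and $Z_2(c)$ passes through a vertex of $T$. The key point is that the weight $w_T(c)$, viewed as a function of the $z$-displacement of $c$ (holding its $x$- and $y$-position fixed), changes at rate exactly $c_z(T)$: edges parallel to the $z$-axis are the only ones whose intersection with $c$ can change length as $c$ slides vertically, and their contribution rate is captured by $c_z(e)$; all $x$- and $y$-parallel edges either stay fully inside, fully outside, or — crucially, because no vertex lies on the bounding planes — have their vertical overlap with $c$ unchanged by an infinitesimal vertical shift. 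Hence the local rate of change of $w_T$ in the $z$-direction is $c_z(T)$, and by the same reasoning in the opposite direction it is $-c_z(T)$.

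Next I would argue that $c_z(T)$ must be zero: if it were positive, sliding $c$ upward would strictly increase its weight, contradicting nothing by itself, but the point is weaker than in Lemma~\ref{hside} — here $c$ is an arbitrary cube, not assumed to be a hotspot. So the correct framing is: if $c_z(T) > 0$ we replace $c$ by the cube obtained by sliding it upward until one of the bounding planes first meets a vertex of $T$; during this slide the weight only increases (the rate stays nonnegative as long as no vertex is crossed, and the first vertex encountered is exactly where we stop), giving a cube of weight $\ge w_T(c)$ with a vertex on an $xy$-parallel bounding plane. Symmetrically, if $c_z(T) < 0$ we slide downward. If $c_z(T) = 0$ we may slide in either direction — say upward — and the weight stays constant until the first vertex meets a bounding plane, again yielding the desired cube. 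In every case we obtain a cube with weight at least $w_T(c)$ satisfying the stated incidence condition.

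The one subtlety worth spelling out, and the step I expect to need the most care, is justifying that the weight is \emph{monotone} (nondecreasing) throughout the slide, not merely that its derivative has the right sign at the starting position. This requires observing that the set of "breakpoints" of $w_T$ as a function of the vertical shift is exactly the set of shifts at which a trajectory vertex crosses one of the two bounding planes, and that between consecutive breakpoints the rate is a constant equal to the current value of $c_z(T)$ — which itself is a step function of the shift, changing only at those same breakpoints. Since we stop the slide at the \emph{first} breakpoint, the rate is constant and of the chosen sign over the entire (half-open) interval of the slide, so the weight changes monotonically as claimed. I would phrase this as a short remark rather than a formal sub-lemma, pointing back to the analogous piecewise-linear structure already used in Definition~\ref{Dcontrib} and Table~\ref{tsweep}, since the $z$-direction behaves identically to the $x$-direction there. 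With that in hand the proof closes exactly as Lemma~\ref{hside} did.
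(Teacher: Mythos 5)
Your proof is correct and takes exactly the route the paper intends: the paper in fact states Lemma~\ref{tside3d} with no proof at all, merely remarking that it ``extends Lemma~\ref{hside},'' and your argument is the natural adaptation of that lemma to the $z$-direction. You also rightly repair two points the paper glosses over --- that here $c$ is an \emph{arbitrary} cube rather than a hotspot, so one must slide in the direction of nondecreasing weight instead of concluding $c_z(T)=0$, and that the rate is piecewise constant with breakpoints only where vertices meet the bounding planes, so the weight is monotone up to the first such event --- both of which are handled correctly.
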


Therefore, to find a hotspot of $T$, it suffices to search among the cubes
with a vertex of $T$ on one of the $xy$-parallel planes containing its
$xy$-parallel faces.  This observation suggests Theorem~\ref{T3d}.

\begin{thm}
\label{T3d}
Suppose algorithm $A$ can find a $c$-approximate hotspot of
any trajectory in $\mathbb{R}^2$ containing $n$ axis-aligned edges
with the time complexity $O(t(n))$.
For a trajectory $T$ in $\mathbb{R}^3$, all of whose $n$ edges are axis-aligned,
there exists an algorithm with the time complexity $O(n \cdot t(n))$
and approximation factor $c$ to find an axis-aligned cube hotspot of $T$.
\end{thm}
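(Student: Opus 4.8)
The plan is to reduce the three-dimensional problem to $O(n)$ instances of the two-dimensional problem, one per candidate $z$-position of the cube, and to call algorithm $A$ on each instance. For a vertex $v$ of $T$ let $z(v)$ denote its $z$-coordinate. By Lemma~\ref{tside3d}, some hotspot of $T$ has a vertex of $T$ on one of its two $xy$-parallel bounding planes, so it suffices to search among cubes $[x,x+s]\times[y,y+s]\times[z_0,z_0+s]$ whose lower plane has $z$-coordinate in the set $Z=\{\,z(v): v \text{ a vertex of } T\,\}\cup\{\,z(v)-s: v \text{ a vertex of } T\,\}$; clearly $|Z|=O(n)$. For each $z_0\in Z$ let $\Sigma_{z_0}$ be the slab $z_0\le z\le z_0+s$, and define a two-dimensional orthogonal trajectory $T_{z_0}$ in the $xy$-plane as follows. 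For every edge $e$ of $T$ parallel to the $x$- or the $y$-axis whose (constant) $z$-coordinate lies in $[z_0,z_0+s]$, add to $T_{z_0}$ the orthogonal projection of $e$ onto the $xy$-plane, keeping its weight. For every edge $e$ of $T$ parallel to the $z$-axis, add to $T_{z_0}$ a degenerate, zero-length edge located at the point onto which $e$ projects, whose weight is the duration during which the entity, while traversing $e$, remains inside $\Sigma_{z_0}$.

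First I would establish the weight identity: for all $x,y$, $w_{T_{z_0}}(\mathit{Square}(x,y))$ equals the weight with respect to $T$ of the cube $[x,x+s]\times[y,y+s]\times[z_0,z_0+s]$. Indeed, an $x$- or $y$-parallel edge of $T$ contributes to this cube exactly when its $z$-coordinate lies in $[z_0,z_0+s]$, in which case its contribution equals that of its projection to the square; and a $z$-parallel edge contributes exactly the portion of its traversal during which its fixed $(x,y)$-point is in the square and its $z$-coordinate is in $[z_0,z_0+s]$, which matches the contribution of the corresponding degenerate edge to the square. The algorithm then runs $A$ on each $T_{z_0}$ to obtain a square $q_{z_0}$ with $w_{T_{z_0}}(q_{z_0})\ge c\cdot h(T_{z_0})$, lifts $q_{z_0}$ to the cube over it in $\Sigma_{z_0}$, and outputs the heaviest such cube over all $z_0\in Z$.

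Correctness follows from two observations that use the weight identity. On one hand, for any $z_0\in Z$ and any square $q$ we have $w_{T_{z_0}}(q)\le h(T)$, hence $h(T_{z_0})\le h(T)$. On the other hand, by Lemma~\ref{tside3d} there is a hotspot $c^\star$ of $T$ whose lower plane lies at some $z_0\in Z$; writing $q^\star$ for its projection, $h(T_{z_0})\ge w_{T_{z_0}}(q^\star)=w_T(c^\star)=h(T)$. Therefore $\max_{z_0\in Z}h(T_{z_0})=h(T)$, and the cube returned has weight $\max_{z_0}w_{T_{z_0}}(q_{z_0})\ge c\cdot\max_{z_0}h(T_{z_0})=c\cdot h(T)$. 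For the time complexity, $|Z|=O(n)$, each $T_{z_0}$ has at most $n$ edges, so each call to $A$ takes $O(t(n))$; constructing all instances (sort $Z$ once, then sweep $z_0$ over $Z$ maintaining the active $x$/$y$-parallel edges and recomputing the weights of the $z$-parallel degenerate edges) costs $O(n\log n+n^2)$, which is dominated by the calls to $A$ since any such algorithm has $t(n)=\Omega(n)$. In total the running time is $O(n\cdot t(n))$.

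The step I expect to be the main obstacle is making this reduction legitimate despite the zero-length edges it introduces: one must check that $A$ accepts such degenerate axis-aligned edges, or otherwise dispose of them cleanly. For the algorithms of this paper this is unproblematic: a point edge of weight $w$ at $x$-coordinate $x_0$ has the piecewise-constant contribution function equal to $w$ on $[x_0-s,x_0]$ and $0$ elsewhere, so its two breakpoints are handled as ordinary update events, and it may be placed on either side of the partition of Lemma~\ref{hdecomp} since a point has a well-defined $y$-coordinate; alternatively, each such point can be replaced by an arbitrarily short axis-aligned segment without affecting the result. A secondary matter needing care is the precise proof of the weight identity above, on which both the approximation bound and the equality $\max_{z_0}h(T_{z_0})=h(T)$ rest.
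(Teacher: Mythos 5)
Your proposal is correct and follows essentially the same route as the paper: reduce via Lemma~\ref{tside3d} to $O(n)$ candidate $z$-positions (one per vertex, for each of the two $xy$-parallel faces), project the edges in each slab to a 2D orthogonal trajectory with $z$-parallel edges becoming weighted zero-length edges, and invoke $A$ on each instance. The only cosmetic difference is that you enumerate both $z(v)$ and $z(v)-s$ directly where the paper reverses the $z$-axis and reruns, and you spell out the weight identity and the handling of degenerate edges somewhat more carefully than the paper does.
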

\begin{proof}
For each vertex $v$ of $T$, let $z(v)$ be the $z$-coordinate of $v$.
Project all edges that are (maybe partially) between $z = z(v)$ and $z = z(v) + s$
to the plane $z = z(v)$ to obtain an orthogonal 2-dimensional trajectory $T'$.
Edges parallel to the $z$-axis are projected to an edge with length zero,
whose weight denotes the duration of the portion between $z = z(v)$ and $z = z(v) + s$.
Perform algorithm $A$ on $T'$ to obtain a square $s$.
Let $c$ be the cube with $Z_1(c)$ on $s$.
It is not difficult to see that $w_T (c)$ is equal to $w_{T'}(s)$.
Record $c$, if it has the maximum weight so far.
Repeat the preceding steps after reversing the direction of the $z$-axis
to find cubes like $c$, with $Z_2(c)$ on a vertex of $T$.
Return the cube with the maximum weight.
Lemma~\ref{tside3d} implies that this cube is a $c$-approximate
hotspot of $T$.
\end{proof}

Theorem~\ref{T3d} and Theorem~\ref{Tsweep12} imply Corollary~\ref{Tcube12}.
\begin{cor}
\label{Tcube12}
A $1/2$-approximate cube hotspot of a three-dimensional trajectory with
$n$ edges can be found with the time complexity $O(n^2 \log^3 n)$.
\end{cor}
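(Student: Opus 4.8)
The plan is to simply chain together the two results as the corollary statement already indicates, but I would be careful to verify that the hypotheses of Theorem~\ref{T3d} are met by the algorithm of Theorem~\ref{Tsweep12}. First I would recall from Theorem~\ref{Tsweep12} that there is an algorithm, call it $A$, that finds a $1/2$-approximate hotspot of any orthogonal trajectory in $\mathbb{R}^2$ with $n$ axis-aligned edges in time $O(n \log^3 n)$; so in the notation of Theorem~\ref{T3d} we have $c = 1/2$ and $t(n) = n \log^3 n$. Then I would invoke Theorem~\ref{T3d}: it promises an algorithm with approximation factor $c = 1/2$ and time complexity $O(n \cdot t(n)) = O(n \cdot n \log^3 n) = O(n^2 \log^3 n)$ for finding an axis-aligned cube hotspot of a trajectory $T$ in $\mathbb{R}^3$ with $n$ axis-aligned edges. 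This is exactly the claim of the corollary.

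The one point that deserves a sentence of justification is that the two-dimensional trajectories $T'$ produced inside the proof of Theorem~\ref{T3d} are legitimate inputs to $A$: each $T'$ is obtained by projecting a subset of the edges of $T$ onto a horizontal plane, so every edge of $T'$ is either the projection of an axis-parallel edge of $\mathbb{R}^3$ (hence axis-parallel in the plane) or the length-zero projection of a $z$-parallel edge (which contributes only its weight and no geometric extent). In either case $T'$ is an orthogonal planar trajectory with at most $n$ edges, so $A$ applies and runs in $O(n \log^3 n)$ time on it. Since Theorem~\ref{T3d} runs $A$ a constant number of times (twice, once per orientation of the $z$-axis) for each of the $n$ vertices of $T$, the total running time is $O(n)$ invocations of $A$, i.e.\ $O(n^2 \log^3 n)$, and the approximation guarantee is inherited verbatim from $A$ via Lemma~\ref{tside3d}.

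There is no real obstacle here; the corollary is a mechanical composition. If anything, the only thing to watch is the bookkeeping on the edge count — one should note that the trajectories $T'$ never have more than $n$ edges (an edge of $T$ crossing the slab $[z(v), z(v)+s]$ is projected to a single planar edge, and edges not meeting the slab are discarded), so that substituting $t(n) = n \log^3 n$ is valid and no hidden blow-up in the input size occurs across the $n$ calls.
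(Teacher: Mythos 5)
Your proposal is correct and matches the paper exactly: the corollary is obtained by instantiating Theorem~\ref{T3d} with the algorithm of Theorem~\ref{Tsweep12}, taking $c = 1/2$ and $t(n) = n \log^3 n$. The extra care you take in checking that the projected trajectories $T'$ are valid orthogonal inputs of size at most $n$ is a welcome (if implicit in the paper) sanity check, not a deviation.
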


It seems possible to generalize the result of Corollary~\ref{Tcube12}
to $R^d$ to obtain an algorithm with the time complexity $O(n^{d-1} \log^3 n)$.

\section*{Acknowledgements}
We are grateful to the anonymous reviewers for several suggestions
to formalize the concepts of the paper and improve its presentation.
We also wish to thank Marc van Kreveld for his valuable
comments on an earlier version of this paper.


\end{document}